\newcommand*{\dfa}{\textsc{dfa}}
\newcommand*{\nfa}{\textsc{nfa}}
\newcommand*{\dfas}{\textsc{dfa}s}
\newcommand*{\revdfa}{\textsc{rev-dfa}}
\newcommand*{\revdfas}{\textsc{rev-dfa}s}
\newcommand*{\krevdfa}[1]{\textsc{rev$_{\!#1}$-dfa}}
\newcommand*{\krevdfas}[1]{\textsc{rev$_{\!#1}$-dfa}s}
\newcommand*{\rev}{\textsc{rev}}
\newcommand*{\krev}[1]{\textsc{rev$_{\!#1}$}}
\renewcommand*{\sigma}{a}
\newtheorem{theorem}{Theorem}
\newtheorem{lemma}{Lemma}
\newtheorem{corollary}{Corollary}
\newtheorem{definition}{Definition}
\newtheorem{ex}{Example}
\newenvironment{example}{\begin{ex}\rm}{\end{ex}}
\newenvironment{example-cont}[1]{\bigskip\noindent\textbf{Example~\ref{#1}.~(cont.)\hspace{\labelsep}}}{\bigskip\noindent}
\newcounter{letter}
\renewcommand{\cal}{\mathcal}
\newcommand{\scc}{\textsc{scc}}
\newcommand{\sccs}{\textsc{scc}s}
\title{Weakly and Strongly Irreversible Regular Languages}
\author{Giovanna J.~Lavado \and Giovanni Pighizzini \and Luca Prigioniero
	\institute{
		Dipartimento di Informatica, Universit\`{a} degli Studi di Milano\\
		via\ Comelico 39/41, 20135 Milano, Italy
	}
	\email{$\{$lavado$,$ pighizzini$,$ prigioniero$\}$@di.unimi.it}
}
\begin{document}
	\maketitle
	\begin{abstract}
		\noindent
		Finite automata whose computations can be reversed, at any point, by
		knowing the last~$k$ symbols read from the input, for a fixed~$k$, are
		considered.
		These devices and their accepted languages are called~$k$-reversible
		automata and~$k$-reversible languages, respectively.
		The existence of~$k$-reversible languages which are
		not~$(k-1)$-reversible is known, for each~$k>1$.
		This gives an infinite hierarchy of \emph{weakly irreversible
		languages}, i.e., languages which are~$k$-reversible for some~$k$.
		Conditions characterizing the class of~$k$-reversible languages, for
		each fixed~$k$, and the class of weakly irreversible languages are
		obtained.
		From these conditions, a procedure that given a finite automaton
		decides if the accepted language is weakly or strongly (i.e., not
		weakly) irreversible is described.
		Furthermore, a construction which allows to transform any finite
		automaton which is not~$k$-reversible, but which accepts
		a~$k$-reversible language, into an equivalent~$k$-reversible finite
		automaton, is presented.
	\end{abstract}

	\section{Introduction}
	The principle of reversibility, which is fundamental in thermodynamics, has
	been widely investigated for computational devices.
	The first works on this topic already appeared half a century ago and are
	due to Landauer and Bennet~\cite{Lan61,Ben73}.
	More recently, several papers presenting investigations on reversibility in
	space bounded Turing machines, finite automata, and other devices appeared
	in the literature~(see, e.g.,~\cite{Ang82, Pin92, KondacsW97, LMT00, Lom02,
	HJK15, LPP16}).

	A process is said to be reversible if its reversal causes no changes in the
	original state of the system.
	In a similar way, a computational device is said to be reversible when each
	configuration has at most one predecessor and one successor, thus implying
	that there is no loss of information during the computation.
	As observed by Landauer, logical irreversibility is associated with
	physical irreversibility and implies a certain amount of heat generation.
	Hence, in order to avoid power dissipation and to reduce the overall power
	consumption of computational devices, it can be interesting to realize
	reversible devices.

	In this paper we focus on finite automata.
	While each two-way finite automaton can be converted into an equivalent one
	which is reversible~\cite{KondacsW97}, in the case of one-way finite
	automata (that, from now on, will be simply called \emph{finite automata})
	this is not always possible, namely there are regular languages as, for
	instance, the language~$a^*b^*$, that are recognized only by finite automata
	that are not reversible~\cite{Pin92}.

	In~\cite{HJK15}, the authors gave an automata characterization of the class
	of \emph{reversible languages}, i.e., the class of regular languages which
	are accepted by reversible automata: a language is reversible if and only
	if the minimum deterministic automaton accepting it does not contain a
	certain \emph{forbidden pattern}.
	Furthermore, they provide a construction to transform a deterministic
	automaton not containing such forbidden pattern into an equivalent
	reversible automaton.
	This construction is based on the replication of some strongly connected
	components in the transition graph of the minimum automaton.
	Unfortunately, this can lead to an exponential increase in the number of
	the states, which, in the worst case, cannot be avoided.
	To overcome this problem, two techniques for representing reversible
	automata, without explicitly describing replicated parts, have been
	obtained in~\cite{LP17}.

	In this paper, we deepen these investigations, by introducing the notions
	of \emph{weakly} and \emph{strongly irreversible} language.
	By definition, a reversible automaton  during a computation is able to move
	back from a configuration (state and input head position) to the previous
	one by knowing the last symbol which has been read from the input tape.
	This is equivalent to saying that all transitions entering in a same state are
	on different input symbols.
	Now, suppose to give the possibility to automata to see back more than one
	symbol on the input tape, in order to move from a configuration to the
	previous one. Does this possibility enlarge the class of languages accepted
	by reversible (in this extended sense) automata? It is not difficult to
	give a positive answer to this question.

	Considering this idea, we recall the notion of~$k$-reversibility: a regular
	language is~\emph{$k$-reversible} if it is accepted by a finite automaton
	whose computations can be reversed by knowing the sequence of the last~$k$
	symbols that have been read from the input tape.
	This notion was previously introduced in~\cite{KW14} by proving the
	existence of an infinite hierarchy of \emph{degrees of irreversibility}:
	for each~$k>1$ there exists a language which is~$k$-reversible
	but not~$(k-1)$-reversible.
	Here we prove that there are regular languages which are not~$k$-reversible
	for any~$k$.
	Such languages are called~\emph{strongly irreversible}, in contrast with
	the other regular languages which are called~\emph{weakly irreversible}.

	As in the case of ``standard'' reversibility (or $1$-reversibility), we
	provide an automata characterization of the classes of weakly and strongly
	irreversible languages. Indeed, generalizing the notion of forbidden
	pattern presented in~\cite{HJK15}, we show that a language
	is~$k$-reversible if and only if the minimum automaton accepting it does
	not contain a certain~\emph{$k$-forbidden pattern}.
	We also give a construction to transform each automaton which does not
	contain the~$k$-forbidden pattern, into an equivalent automaton which
	is~$k$-reversible.
	Furthermore, using a pumping argument, we prove that if an $n$-state
	automaton contains an~$N$-forbidden pattern, for a constant~$N=O(n^2)$,
	then it contains a $k$-forbidden pattern for each~$k>0$.
	Hence, applying this condition to the minimum automaton accepting a
	language~$L$, we are able to decide if~$L$ is weakly or strongly
	irreversible.
	We finally present a decision procedure for such problem.

	We point out that, according to the approach in~\cite{HJK15}, in this paper
	we refer to the \emph{classical model} of deterministic automata, namely
	automata with a unique initial state, a set of final states, and
	deterministic transitions.
	Different approaches have been considered in the literature.
	The notion of reversibility in~\cite{Ang82} is introduced by considering
	deterministic devices with one initial state and one final state, while
	automata with a set of initial states, a set of final states and
	deterministic transitions have been considered in~\cite{Pin92}.
	In particular, the notion of reversibility in~\cite{Ang82} is more
	restrictive than the one studied in~\cite{HJK15} and in this paper.
	Hence, also the notion of~$k$-reversibility, introduced and studied here,
	is different from a notion of~$k$-reversibility studied in~\cite{Ang82}.

	\section{Preliminaries}
	In this section we recall some basic definitions and results useful in the
	paper.
	For a detailed exposition, we refer the reader to~\cite{HU79}.

	Given a set~$S$, let us denote by~$\#S$ its cardinality, by~$2^S$ the
	family of all its subsets, and by~$S^{< k}$ ($S^k$, respectively), for a
	fixed integer~$k \geq 0$, the set of sequences of less than (exactly,
	resp.) $k$ elements from~$S$, where~$\varepsilon$ is the empty sequence.
	Given an alphabet~$\Sigma$, $|w|$~denotes the length of a string~$w \in
	\Sigma^*$.

	A \emph{deterministic finite automaton} (\dfa) is a tuple~$A=(Q,
	\Sigma, \delta, q_I, F)$, where~$Q$ is the finite set of \emph{states},
	$\Sigma$ is the \emph{input alphabet}, $q_I \in Q$ is the \emph{initial
	state}, $F \subseteq Q$ is the set of \emph{accepting} (or \emph{final})
	\emph{states}, and $\delta: Q \times \Sigma \rightarrow Q$ is the partial
	\emph{transition function}.
	A \emph{nondeterministic finite automaton} (\nfa) is an automaton in which
	it is possible to reach multiple states at the same time: multiple initial
	states are allowed and the transition function is defined as~$\delta: Q
	\times \Sigma \rightarrow 2^Q$.
	The \emph{language accepted} by an automaton is defined in classical way as
	the set of all strings that define a path from one initial state to one of
	the accepting states.

	Let~$A=(Q, \Sigma, \delta, q_I, F)$ be a \dfa.
	A state~$p\in Q$ is \emph{useful} if it is \emph{reachable}, i.e., there
	exists~$w \in \Sigma^*$ such that~$\delta(q_I, w) = p$, and
	\emph{productive}, i.e., if there is~$w \in \Sigma^*$ such
	that~$\delta(p,w) \in F$.
	In this paper we only consider automata with all useful states.

	The \emph{reverse} transition function of~$A$ is~$\delta^R: Q \times \Sigma
	\rightarrow 2^Q$, with $\delta^R(p, a) = \{q \in Q \mid \delta(q,a) = p\}$. 
	The \emph{reverse automaton}~$A^R = (Q,\Sigma, \delta^R, F, \{q_I\})$ is
	the \nfa\ obtained by reversing the transition function~$\delta$ and in
	which the set of initial states coincides with the set of final states
	of~$A$ and the unique final state is~$q_I$.

	A state~$r \in Q$ is said to be \emph{irreversible} when~$\#\delta^R(r,a) >
	1$ for some~$a\in\Sigma$, i.e., there are at least two transitions on the
	same letter entering~$r$, otherwise~$r$ is said to be \emph{reversible}.
	The \dfa\ $A$~is said to be \emph{irreversible} if it contains at least one
	irreversible state, otherwise $A$ is \emph{reversible} (\revdfa).
	As pointed out in~\cite{Kut15}, the notion of reversibility for a language
	is related to the computational model under consideration.
	In this paper we only consider \dfas.
	Hence, by saying that a language~$L$ is \emph{reversible}, we refer to this
	model, namely we mean that there exists a \revdfa\ accepting~$L$.
	The class of reversible languages is denoted by \rev.

	We say that two states~$p,q \in Q$ are \emph{equivalent} if and only if for
	all~$w\in \Sigma^*$, $\delta(p,w) \in F$ exactly when~$\delta(q,w)\in F$.
	Two automata~$A$ and~$A'$ are said to be \emph{equivalent} if they accept
	the same language, i.e., $L(A)=L(A')$.

	A \emph{strongly connected component} (\scc)~$C$ of a \nfa\ or a \dfa~$A$
	is a maximal subset of~$Q$ such that in the transition graph of~$A$ there
	exists a path between every pair of states in~$C$.
	Let us denote by~$\mathcal{C}_q$ the~\scc\ containing the state~$q \in Q$.

	We consider a partial order~$\preceq$ on the set of \sccs\ of~$A$, such
	that, for two such components~$C_1$ and~$C_2$, $C_1\preceq C_2$ when
	either~$C_1 = C_2$ or no state in~$C_1$ can be reached from a state
	in~$C_2$, but a state in~$C_2$ is reachable from a state in~$C_1$.
	We write~$C_1 \prec C_2$ when~$C_1\preceq C_2$ and~$C_1\neq C_2$.

	\section{Strong and weak irreversibility}
	In this section we introduce the main notions we consider in this paper, by
	defining strong and weak irreversibility and by presenting their basic
	properties.

	\begin{definition}
	\label{def:krev}
		Let~$k>0$ be an integer, $A=(Q,\Sigma,\delta,q_I,F)$ be a~\dfa,
		and~$L\subseteq\Sigma^*$ be a regular language.
		\begin{itemize}
			\item A state~$r \in Q$ is said to be \emph{$k$-irreversible} if
				there exist a string~$x\in\Sigma^{k-1}$ and a
				symbol~$\sigma\in\Sigma$, such that the cardinality  of the
				following set is greater than~$1$: \[\{\delta(p,x)\mid p\in
				Q\mbox{ and }\delta(p,x\sigma)=r\}\,.\]
				Otherwise, $r$ is said to be~\emph{$k$-reversible}.
			\item The automaton~$A$ is \emph{$k$-reversible} if each of its
				states is~$k$-reversible.
			\item The language~$L$ is~$k$-reversible if there exists
				a~$k$-reversible \dfa\ accepting it.
			\item The language~$L$ is \emph{weakly irreversible} if it
				is~$k$-reversible for some integer~$k > 0$.
			\item The language~$L$ is \emph{strongly irreversible} if it is not
				weakly irreversible.
		\end{itemize}
	\end{definition}
	By definition, a state~$r$ is~$1$-reversible if and only if it is
	reversible.
	As a consequence, $1$-reversibility coincides with reversibility.

	In the case of a~$k$-reversible state~$r$, with~$k>1$, we could have more
	than one transition on the same symbol~$\sigma$ entering~$r$.
	However, by knowing the suffix of length~$k$ of the part of the input
	already inspected, i.e., a suffix~$x\sigma$ with~$\lvert x \rvert = k-1$,
	we can uniquely identify which transition on~$\sigma$ has been used to
	enter~$r$ in the current computation.
	In other terms, while a reversible automaton is a device which is able to
	move the computation one state back, by knowing the last symbol that has
	been read, a~$k$-reversible automaton can do the same, having the suffix of
	length~$k$ of the part of the input already inspected (when the length of
	that part is less than~$k$, the automaton can see all the input so far
	inspected).

	\bigskip

	Let us denote by \krev{k} the class of~$k$-reversible languages.
	Hence, $\rev=\krev{1}$. Furthermore $k$-reversible \dfas\ are indicated as
	\krevdfas{k}, for short.

	From Definition~\ref{def:krev}, we can immediately prove the following
	facts:

	\begin{lemma}
		\label{lemma:krev}
		Let~$k>0$ be an integer, $A=(Q,\Sigma,\delta,q_I,F)$ be a~\dfa,
		and~$L\subseteq\Sigma^*$ be a regular language.
		\begin{itemize}
			\item If a state~$q\in Q$ is~$k$-reversible, then it
				is~$k'$-reversible for each~$k'>k$.
			\item If a state~$q\in Q$ is~$k$-irreversible, then it
				is~$k'$-irreversible for each~$k'<k$.
			\item If~$A$ is~$k$-reversible, then it is~$k'$-reversible for
				each~$k'>k$.
			\item If~$L$ is~$k$-reversible, then it is~$k'$-reversible for
				each~$k'>k$.
		\end{itemize}
	\end{lemma}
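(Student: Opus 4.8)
The plan is to prove each of the four bullet points in turn, observing that the last three follow easily once the first is established. For the first bullet, I would argue directly from the definition of $k$-reversibility. Suppose $q\in Q$ is $k$-reversible and fix $k'>k$; I want to show $q$ is $k'$-reversible. By contraposition, suppose instead that $q$ is $k'$-irreversible: there are a string $y\in\Sigma^{k'-1}$ and a symbol $\sigma\in\Sigma$ with $\#\{\delta(p,y)\mid p\in Q,\ \delta(p,y\sigma)=q\}>1$. The idea is to shorten $y$ to its suffix of length $k-1$: write $y=zx$ with $x\in\Sigma^{k-1}$. Take two distinct states $u_1=\delta(p_1,y)$ and $u_2=\delta(p_2,y)$ witnessing the irreversibility, and set $p_i'=\delta(p_i,z)$, so that $\delta(p_i',x)=\delta(p_i,zx)=\delta(p_i,y)=u_i$ and $\delta(p_i',x\sigma)=\delta(p_i,y\sigma)=q$. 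Hence $u_1,u_2\in\{\delta(p,x)\mid p\in Q,\ \delta(p,x\sigma)=q\}$, and since $u_1\neq u_2$ this set has cardinality greater than $1$, so $q$ is $k$-irreversible, contradicting the hypothesis.

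The second bullet is the contrapositive of the first: if $q$ is $k$-irreversible and $k'<k$, then, were $q$ to be $k'$-reversible, the first bullet (applied with the roles $k'\to k$, since $k>k'$) would make $q$ be $k$-reversible, a contradiction; so $q$ is $k'$-irreversible. The third bullet is immediate from the first: if $A$ is $k$-reversible then every state of $A$ is $k$-reversible, hence every state is $k'$-reversible for $k'>k$, hence $A$ is $k'$-reversible. The fourth bullet follows from the third: if $L$ is $k$-reversible there is a $k$-reversible \dfa\ $A$ with $L(A)=L$; by the third bullet $A$ is also $k'$-reversible for every $k'>k$, and it still accepts $L$, so $L$ is $k'$-reversible.

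The only step requiring any care is the first one, and even there the main point is simply the bookkeeping of indices: checking that the suffix $x$ of length $k-1$ of the witness string $y$ of length $k'-1$ does the job, using that $\delta$ is extended to strings in the usual way and that determinism gives $\delta(p,zx)=\delta(\delta(p,z),x)$. There is no real obstacle; the lemma is a routine monotonicity observation, which is why the statement says it can be proved ``immediately'' from Definition~\ref{def:krev}. I would present the argument for the first bullet in the contrapositive form above (it reads more cleanly than carrying distinct states forward through a forward implication) and then dispatch the remaining three bullets in one or two sentences each.
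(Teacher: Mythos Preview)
Your proposal is correct and matches the paper's approach: the paper gives no proof at all, merely noting that the lemma follows ``immediately'' from Definition~\ref{def:krev}, and your argument is precisely the routine unwinding of that definition one would expect. The only minor remark is that since $\delta$ is partial you implicitly use that $\delta(p_i,y\sigma)=q$ being defined forces $p_i'=\delta(p_i,z)$ to be defined as well, which is clear but could be mentioned in a fully written-out version.
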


	\begin{example}{\textbf{\cite{KW14}}}
		\label{ex:prel}
		For each integer~$k > 0$, consider the language~$a^*b^kb^*$, which is
		accepted by the minimum automaton depicted in Figure~\ref{fig:inf-fam}.
		The only irreversible state is~$q_k$.

		Suppose that, after reading a string~$w$, the automaton is in~$q_k$.
		If we know a suffix of~$w$ of length~$i$, with~$i \leq k$, (this suffix
		can only be~$b^i$) then we cannot determine the previous state in the
		computation, i.e., the state before reading the last symbol of~$w$.
		In fact, this state could be either~$q_{k-1}$ or~$q_k$.
		Hence, the automaton is not~$k$-reversible.
		However, if we know the suffix of length~$k+1$, then it could be
		either~$b^{k+1}$, and in this case the previous state is~$q_k$,
		or~$ab^k$, and in this case the previous state is~$q_{k-1}$.
		It could be also possible that only~$k$ input symbols have been read,
		i.e., $\lvert w \rvert = k$.
		In that case, all~$w = b^k$ can be seen back and the previous state
		is~$q_{k-1}$.
		Hence, the automaton is~$(k+1)$-reversible.
		As shown in~\cite[Theorem~4]{KW14} we cannot do better for this
		language, i.e., $L \in \krev{k+1} \setminus \krev{k}$.
		This can be also obtained as a consequence of results in
		Section~\ref{sec:characterization}.
	\end{example}
	\begin{figure}[tb]
		\centering
		\begin{tikzpicture}[->,>=stealth',shorten >=1pt,auto,node distance=2cm,thick]
			\node[initial,initial text=,state] (q0) {$q_I$};
			\node[state] (q1) [right of=q0]	{$q_1$};
			\node[state] (q2) [right of=q1]	{$q_{k-1}$};
			\node[state,accepting] (q3) [right of=q2] {$q_k$};

			\path[every node/.style={font=\sffamily\small}]
			(q0) edge  [loop above] node {$a$} ()
			(q3) edge  [loop above] node {$b$} ()
			(q1) edge  [dashed] node {} (q2)
			(q2) edge  [] node {$b$} (q3)
			(q0) edge  [] node {$b$} (q1);
		\end{tikzpicture}
		\caption{The minimum automaton accepting the language~$a^*b^kb^*$}
		\label{fig:inf-fam}
	\end{figure}

	As a consequence of the last item in Lemma~\ref{lemma:krev} and of
	Example~\ref{ex:prel} we have the proper infinite hierarchy of classes
	\[\rev=\krev{1}\subset\krev{2}\subset\cdots\subset\krev{k}\subset\cdots\]

	In~\cite{HJK15}, the authors proved that a regular language is irreversible
	if and only if the minimum \dfa\ accepting it contains a  \emph{forbidden
	pattern}, which consists of two transitions of a same letter entering in a
	same state~$r$, where one of them arrives from a state~$p$ which belongs to
	the same strongly connected component of~$r$.
	We now refine such definition in order to consider strings of the same
	length that lead to the same state:

	\begin{definition}
		\label{def:kforb}
		Given a \dfa~$A=(Q,\Sigma,\delta,q_I,F)$ and an integer~$k>0$, the
		\emph{$k$-forbidden pattern} is formed by three states~$p,q,r\in Q$, with a
		symbol~$\sigma\in\Sigma$, two strings~$x\in\Sigma^{k-1}$
		and~$w\in\Sigma^*$, such that~$p\neq q$, $\delta(p,x)\neq\delta(q,x)$,
		$\delta(p,x\sigma)=\delta(q,x\sigma)=r$, and~$\delta(r,w)=q$.
	\end{definition}
	The~$k$-forbidden pattern just defined is depicted in
	Figure~\ref{fig:k-fp}.

	\begin{figure}[htbp]
		\centering
		\begin{tikzpicture}[->,>=stealth',shorten >=1pt,auto,thick,node
			distance=1.75cm,transform shape]
			\node[state] (r) {$r$};
			\node[state] (p') [above left=0.4cm and 1.75cm of r] {};
			\node[state] (q') [below left=0.4cm and 1.75cm of r] {};
			\node[scale=1.75] (d) at ($(p')!.5!(q')$) {$\neq$};
			\node[state] (p) [left=of p'] {$q$};
			\node[state] (q) [left=of q'] {$p$};

			\path[]
			(p) edge [] node {$x$} (p')
			(q) edge [] node {$x$} (q')
			(p') edge [pos=0.175] node {$\sigma$} (r)
			(q') edge [pos=0.5] node {$\sigma$} (r)
			(r) edge [
				loosely dashed,
				bend right=80,
				looseness=1.155,
				decorate,
				decoration={
					snake, 
					amplitude = 4mm,
					segment length = 15mm,
					post length=1mm
			}] node {$w$} (p);
		\end{tikzpicture}
		\caption{The $k$-forbidden pattern: $x\in\Sigma^{k-1}$, $\sigma\in\Sigma$, $w\in\Sigma^*$}
		\label{fig:k-fp}
	\end{figure}
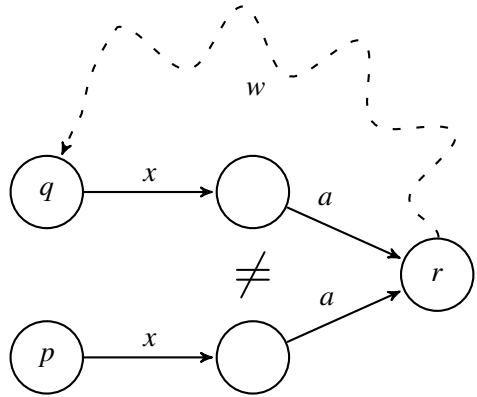

	\noindent 
	From Definition~\ref{def:kforb}, we can observe that if a \dfa~$A$ contains
	a~$k$-forbidden pattern, for some~$k>0$, then it contains a~$k'$-forbidden
	pattern for each~$k'$, with~$0 < k' < k$.

	The notion of~$k$-forbidden pattern will be used in the subsequent sections
	to obtain a characterization of the class \krev{k}. In fact, we will prove
	that a regular language is~$k$-reversible if and only if the minimum \dfa\
	accepting it does not contain the~$k$-forbidden pattern.

	\section{$k$-reversible simulation}
	\label{sec:simulation}

	In this section we present a construction to build, given a
	\dfa~$A=(Q,\Sigma,\delta,q_I,F)$ and an integer~$k>0$, an equivalent
	\dfa~$A'=(Q',\Sigma,\delta',q'_I,F')$, which is~$k$-reversible if~$A$ does
	not contain the~$k$-forbidden pattern.
	The \dfa~$A'$ simulates~$A$ by storing in its finite control three
	elements:
	\begin{itemize}
		\item The current state~$q$ of~$A$.
		\item An integer~$j\in\{1,\ldots,k\}$ which is used to count the
			first~$k$ visits to states in the current \scc\ of~$A$, namely in
			the \scc\ which contains the current state~$q$.
			When the value of the counter reaches~$k$, it is no more
			incremented, until a transition leaving the \scc.
			At that point, after saving its value in the third component of the
			state, $1$ is assigned to the counter for denoting the first visit
			in the \scc\ just reached.
		\item A sequence of pairs from~$Q\times\{1,\ldots,k\}$.
			This is the sequence of the first two components of the states
			in~$Q'$ which have been reached before simulating a transition that
			in~$A$ changes \scc.
			Since the number of possible \sccs\ is bounded by~$\#Q$, we
			consider sequences of length less than~$\#Q$.
	\end{itemize}
	Formally, we give the following definition:
	\begin{itemize}
		\item $Q' = Q \times \{1,\ldots,k\} \times (Q \times \{1, \ldots,
			k\})^{<\#Q}$,
		\item for~$\langle q,j,\alpha\rangle\in Q'$, if $\delta(q,a)=p$ then
			\[\delta'(\langle q,j,\alpha\rangle,a)=\left\{
					\begin{array}{ll}
						\langle p,\min\{j+1,k\}, \alpha \rangle &
							\mbox{if~${\cal C}_p={\cal C}_q$ }\\
						\langle p, 1, \alpha\cdot(q,j)\rangle &
							\mbox{otherwise,}\\
					\end{array}
				\right.
			\]
			while $\delta'(\langle q,j,\alpha\rangle,a)$ is not defined when
			$\delta(q,a)$ is not defined, and~$\cdot$ denotes the concatenation
			of a pair at the end of sequence,
		\item $q'_I=\langle q_I,1,\varepsilon\rangle$ is the initial state,
		\item $F' = F \times \{1, \ldots, k\} \times (Q \times \{1, \ldots,
			k\})^{<\#Q}$ is the set of final states.
	\end{itemize}
	Notice that by dropping the second and the third components off the states
	of~$A'$, we get exactly the automaton~$A$.
	Hence, $A$ and~$A'$ are equivalent.

	Furthermore, observe that if~$\delta'(\langle p,h,\alpha\rangle, a)=\langle
	r, \ell, \gamma \rangle$, with~$\langle p, h, \alpha \rangle, \langle r,
	\ell, \gamma \rangle \in Q'$, $a\in\Sigma$, and $1<\ell\leq k$, then the
	states~$p$ and~$r$ are in the same \scc\ of~$A$ and $h=\ell-1$ when
	$\ell<k$, while $h\in\{k-1,k\}$ when~$\ell=k$.
	This fact will be
	used in the following proof of the main property of~$A'$.

	\begin{lemma}
		\label{lemma:kfp}
		If~$A$ does not contain the~$k$-forbidden pattern, then~$A'$ is a $k$-reversible \dfa.
	\end{lemma}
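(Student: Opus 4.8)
The plan is to argue by contradiction: suppose $A'$ is not $k$-reversible, i.e.\ it contains a $k$-irreversible state $\langle r,\ell,\gamma\rangle$, and extract from this a $k$-forbidden pattern in $A$. By definition there are a string $X\in\Sigma^{k-1}$ and a symbol $a\in\Sigma$ and two distinct states $\langle p_1,h_1,\alpha_1\rangle$, $\langle p_2,h_2,\alpha_2\rangle$ of $Q'$ with
$\delta'(\langle p_i,h_i,\alpha_i\rangle,Xa)=\langle r,\ell,\gamma\rangle$ for $i=1,2$,
while the two states $\delta'(\langle p_i,h_i,\alpha_i\rangle,X)$ are distinct. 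First I would project everything down to $A$ via the first-component map $\pi\colon Q'\to Q$, which (as noted in the excerpt) is a simulation homomorphism; this immediately gives $\delta(\pi(p_i\text{-state}),Xa)=r$ in $A$. The real content is to show $\delta(p_1',X\sigma)$-style data actually yields $p_1\neq p_2$ in $Q$ together with $\delta(p_1,X)\neq\delta(p_2,X)$ and an $r$-to-one-of-them loop $w$.

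The key step is to analyse the counter and stack components along the last $k$ transitions of the two computations. I would split into cases according to the value of $\ell$. If $1<\ell\le k$, then by the ``observe that'' remark quoted just before the lemma, each of the last $\ell-1$ (hence, after padding, the last several) transitions reaching $\langle r,\ell,\gamma\rangle$ stays inside the \scc\ $\mathcal C_r$ of $A$ and strictly increments the counter (or holds it at $k$); running this backwards along $Xa$, the counter values force the two predecessor chains to lie entirely in $\mathcal C_r$, so in particular $p_1,p_2\in\mathcal C_r$ and the intermediate states $\delta(p_i,X)$ are in $\mathcal C_r$ as well. Then $p_1\ne p_2$ because the two $Q'$-states are distinct but share the same stack $\gamma$ \emph{unwound by the same symbols}, and the same reasoning (plus the fact that $\delta'$ is deterministic, so equal intermediate $Q'$-states would force equal source $Q'$-states) gives $\delta(p_1,X)\ne\delta(p_2,X)$. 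Finally, since $p_1$ (say) and $r$ are in the same \scc\ $\mathcal C_r$ of $A$, there is $w\in\Sigma^*$ with $\delta(r,w)=p_1$; renaming so that this $p_1$ plays the role of $q$ and $p_2$ the role of $p$ in Definition~\ref{def:kforb}, we obtain the $k$-forbidden pattern, a contradiction.

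The remaining case $\ell=1$ must be handled separately: here the last transition into $\langle r,1,\gamma\rangle$ is a \scc-changing transition of $A$, so $\gamma=\gamma'\cdot(q,j)$ for the appropriate predecessor data, and the stack component of both predecessor states is $\gamma'$ with last simulated state $q$. I would argue that in this situation the two $Q'$-predecessors already differ in an earlier component, and chase that difference back: either the stacks $\alpha_1,\alpha_2$ differ, or the counters differ, or the first components $p_1,p_2$ differ --- and in each sub-case, walking back at most $k-1$ further steps along $X$, one reaches a genuine branching inside a single \scc\ of $A$, again producing the $k$-forbidden pattern. I expect this $\ell=1$ bookkeeping --- making sure the stack/counter discrepancy propagates to a same-length, same-\scc\ branch rather than ``leaking'' across an \scc\ boundary where it would be harmless --- to be the main obstacle; the $1<\ell\le k$ case is essentially forced by the quoted structural observation, whereas here one has to use that $A$ has only useful states and that the stack length is bounded by $\#Q$ to rule out spurious configurations.
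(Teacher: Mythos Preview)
Your overall setup (contradiction; case split on the counter value $\ell$ of the hypothetical $k$-irreversible state) is the same as the paper's, but the content of the cases is misdiagnosed, and this is a genuine gap.

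For $1<\ell<k$ you claim that ``the counter values force the two predecessor chains to lie entirely in $\mathcal C_r$, so in particular $p_1,p_2\in\mathcal C_r$.'' This is false. Walking back from $\langle r,\ell,\gamma\rangle$ the counter decreases by one at each step until it hits~$1$, which happens after only $\ell-1$ backward steps; one more step back is precisely the \scc-changing transition. Since $|Xa|=k>\ell$, both length-$k$ chains \emph{leave} $\mathcal C_r$ before reaching your $p_1,p_2$, so no path $w$ from $r$ back to $p_1$ exists in general and your extraction of the $k$-forbidden pattern collapses. The same objection applies to the sub-case $\ell=k$ with $h=j=k-1$. The paper does something quite different here: because the chains leave $\mathcal C_r$ within the last $k$ steps, the \scc-crossing transition writes its source $(p_1,h_1)$ (resp.\ $(q_1,j_1)$) onto the stack; since the resulting stacks must both equal $\gamma$, one gets $p_1=q_1$ and then, by determinism of $A$, the two ``different'' predecessors $\langle p,h,\alpha\rangle$ and $\langle q,j,\beta\rangle$ are actually equal --- a direct contradiction, with no forbidden pattern produced at all. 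Only in the sub-case $\ell=k$ with $h=k$ (or symmetrically $j=k$) does the counter stay $\ge 1$ for all $k-1$ backward steps along $X$, placing $p_0$ in $\mathcal C_r$ and genuinely yielding the $k$-forbidden pattern; your argument is essentially correct there, but only there.

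Your treatment of $\ell=1$ is also headed the wrong way. You correctly note that the last transition is \scc-changing, so $\gamma=\gamma'\cdot(s,j)$ and both immediate predecessors have stack $\gamma'$ and first component $s$; but you overlook that the pair pushed on the stack also records the \emph{counter}~$j$. Hence both predecessors are exactly $\langle s,j,\gamma'\rangle$, and you have an immediate contradiction --- there is nothing to ``chase back'' and no forbidden pattern to find. The elaborate plan involving useful states and the bound on stack length is unnecessary.
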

	\begin{proof}
		By contradiction, let us suppose that~$A'$ contains a~$k$-irreversible state~$\langle r,\ell,\gamma\rangle\in Q'$.
		Then there exist a string~$x\in\Sigma^{k-1}$, a symbol~$a\in\Sigma$, states~$\langle p_0,h_0,\alpha_0\rangle,
		\langle p,h,\alpha\rangle, \langle q_0,j_0\,\beta_0\rangle, \langle q,j,\beta\rangle\in Q'$, such that
		$\delta'(\langle p_0,h_0,\alpha_0\rangle, x) = \langle p,h,\alpha\rangle$,
		$\delta'(\langle q_0,j_0,\beta_0\rangle, x) = \langle q,j,\beta\rangle$,
		$\delta'(\langle p,h,\alpha\rangle, a) = \delta'(\langle q,j,\beta\rangle, a) = \langle r,\ell,\gamma\rangle$,
		and~$\langle p,h,\alpha\rangle\neq\langle q,j,\beta\rangle$.
		The situation is summarized in the following picture:
		\[
			\begin{array}{cl}
				\langle p_0,h_0,\alpha_0\rangle\xrightarrow{~~x~~}\langle p,h,\alpha\rangle{\rotatebox[origin=l]{-30}{$\xrightarrow{~a~}$}}&~\\
				\phantom{\langle p_0,h_0,\alpha_0\rangle\xrightarrow{~~x~~}\langle p}\neq\phantom{\alpha\rangle\xrightarrow{~a~}}&\hspace*{-2ex}\langle r,\ell,\gamma\rangle\\
				\langle q_0,j_0,\beta_0\rangle\xrightarrow{~~x~~}\langle q,j,\beta\rangle{\rotatebox[origin=l]{30}{$\xrightarrow{~a~}$}}&~\\
			\end{array}
		\]
		For~$k>1$, the proof is divided in three cases, depending on the value
		of~$\ell$.
		\medskip

		\begin{itemize}

			\item \emph{Case~$\ell=1$.}\\
				Considering the definition of~$\delta'$, we notice that both states~$p$ and~$q$ are not in the same \scc\ of~$r$.
				Then $\gamma=\alpha\cdot(p,h)=\beta\cdot(q,j)$, thus implying $\alpha=\beta$, $p=q$, and~$h=j$. This is a contradiction to the hypothesis~$\langle p,h,\alpha\rangle\neq\langle q,j,\beta\rangle$. 

			\item \emph{Case~$1<\ell<k$.}\\
				Again from the definition of~$\delta'$, we can observe that
				$h=j=\ell-1<k-1$ and~$\alpha=\beta=\gamma$.
				We decompose~$x$ as $x'bx''$, where $x',x''\in\Sigma^*$,
				$b\in\Sigma$ and~$|x''|=\ell-2$.
				Then, in the paths on the string~$x$ from $\langle p_0, h_0,
				\alpha_0 \rangle$ to~$\langle p, \ell-1, \alpha \rangle$
				and from~$\langle q_0,j_0,\beta_0\rangle$ to~$\langle q,
				\ell-1, \alpha \rangle$ the last transitions that change \scc\ 
				in~$A$ are those on the symbol~$b$, immediately after the
				prefix~$x'$, i.e, we have the following situation:
				\[
					\begin{array}{cl}
						\langle p_0,h_0,\alpha_0\rangle\xrightarrow{~~x'~~}\langle p_1,h_1,\alpha_1\rangle\xrightarrow{~b~}\langle p_2,1,\alpha\rangle\xrightarrow{~~x''~~}\langle p,\ell-1,\alpha\rangle{\rotatebox[origin=l]{-30}{$\xrightarrow{~a~}$}}&~\\
																																																																																																					 &\hspace*{-2ex}\langle r,\ell,\alpha\rangle\\
						\langle q_0,j_0,\beta_0\rangle\xrightarrow{~~x'~~}\langle q_1,j_1,\beta_1\rangle\xrightarrow{~b~}\langle q_2,1,\alpha\rangle\xrightarrow{~~x''~~}\langle q,\ell-1,\alpha\rangle{\rotatebox[origin=l]{30}{$\xrightarrow{~a~}$}}&~\\
					\end{array}
				\]
				for suitable
				$\langle p_1,h_1,\alpha_1\rangle,\langle q_1,j_1,\beta_1\rangle\in Q'$, $p_2,q_2\in Q$.
				Then~$\alpha=\alpha_1\cdot(p_1,h_1)=\beta_1\cdot(q_1,j_1)$,
				that implies~$p_1=q_1$. As a consequence, since~$A$ is
				deterministic we get that $p_2=q_2$ and~$p=q$. Thus, also in
				this case we get the contradiction~$\langle
				p,h,\alpha\rangle=\langle q,j,\beta\rangle$.

			\item \emph{Case~$\ell=k$.}\\
				From the definition of~$\delta'$, we notice that either
				$h=j=k-1$, or at least one of~$h$ and~$j$ is equal to~$k$.
				In the first case, the proof can be completed as in the
				case~$1<\ell<k$, leading to a contradiction.
				In the case~$h=k$, moving backwards from the state $\langle
				p,k,\alpha\rangle$ to~$\langle p_0,h_0,\alpha_0\rangle$, along
				the transitions on the string~$x$ of length~$k-1$, we find a
				sequence of states whose all second components are equal
				to~$k$, which is followed by a (possibly empty) sequence of
				states where the values of the second components decrease
				by~$1$ at each transition.
				In this way we can conclude that~$h_0 \geq 1$ and all the first
				components, included~$p_0$, of states on this path, are in the
				same \scc\ of~$r$.
				Hence, $A$ contains the~$k$-forbidden pattern.
				The case~$j=k$ is similar.
		\end{itemize}
		For~$k=1$, if~$p$ or~$q$ are in the same \scc\ as~$r$ then~$A$ should
		contain the~$1$-forbidden pattern.
		Otherwise, we can proceed as in the case~$\ell = 1$, obtaining a
		contradiction.
	\end{proof}

	\noindent
	We now evaluate the size of the automaton obtained by using the previous
	construction.

	\begin{theorem}
		\label{th:states}
		Each~$n$-state \dfa\ which does not contain the~$k$-forbidden pattern
		can be simulated by an equivalent~$k$-reversible \dfa\ with no more
		than~$(k+1)^{n-1}$ states.
	\end{theorem}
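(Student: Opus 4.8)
The plan is to take the automaton~$A'$ produced by the construction of Section~\ref{sec:simulation}, discard its non-useful states, and show that what remains has at most~$(k+1)^{n-1}$ states. By Lemma~\ref{lemma:kfp} the automaton~$A'$ is $k$-reversible; it is equivalent to~$A$ because erasing the second and third components of a state recovers~$A$, and removing non-useful states changes neither the accepted language nor $k$-reversibility (every state of the restriction is still $k$-reversible). So the whole problem reduces to counting the useful states of~$A'$.

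To do that I would first pin down the shape of a useful state~$\langle q,j,\alpha\rangle$. Writing $\alpha=(p_1,h_1)(p_2,h_2)\cdots(p_m,h_m)$, a straightforward induction on the length of a computation reaching the state shows that a pair is appended to~$\alpha$ exactly when the simulated run of~$A$ leaves a~\scc, and since a~\scc\ of~$A$ is never re-entered once left, one gets $p_1\in\mathcal{C}_{q_I}$ and $\mathcal{C}_{p_1}\prec\mathcal{C}_{p_2}\prec\cdots\prec\mathcal{C}_{p_m}\prec\mathcal{C}_q$. Hence $p_1,\dots,p_m,q$ are $m+1$ pairwise distinct states of~$A$, so in particular $m\le n-1$, and the state~$\langle q,j,\alpha\rangle$ is recovered from: a $\prec$-chain of~\sccs\ issued from~$\mathcal{C}_{q_I}$, one ``marked'' state per~\scc\ of the chain (the $p_i$'s and~$q$), and a value in~$\{1,\dots,k\}$ for each marked state (the $h_i$'s and~$j$). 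Moreover these values are not arbitrary: inside a trivial~\scc\ the counter is frozen at~$1$, and inside any~\scc\ it is a function of how many transitions have been taken there, which further cuts down the realizable combinations on each chain.

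The remaining and main difficulty is to turn these constraints into the sharp bound~$(k+1)^{n-1}$, since the plain sum over chain-transversals already overshoots (on a chain of trivial~\sccs\ it gives $k+k^2+\cdots$). The way I would close the gap is to merge copies of a state of~$A$ that cannot be told apart: by the definition of~$F'$, two useful states of~$A'$ accept the same language precisely when their first components are equivalent in~$A$, so on the minimal~$A$ one can identify all copies of a state except where a genuine ambiguity of the last~$k$ input symbols forbids it. One then has to check — using once more that~$A$ contains no $k$-forbidden pattern — that this partial merging does not create the $k$-forbidden pattern in the quotient, so the result is still $k$-reversible and equivalent; finally one counts the surviving states by encoding each of them as the function $Q\to\{0,1,\dots,k\}$ assigning each state its counter value ($0$ to the unmarked ones), and argues that the chain constraint together with the merging leaves exactly $n-1$ free coordinates. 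I expect this last combinatorial step — determining precisely which copies must be kept and showing that the saving is exactly one factor of~$k+1$ — to be the delicate part of the proof.
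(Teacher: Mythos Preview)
Your outline is correct up to the encoding of a reachable state of~$A'$ as a function $Q\to\{0,1,\dots,k\}$, which yields the bound~$(k+1)^n$; the paper does exactly this. Where you diverge is in the last step, and there your plan has a genuine gap. You propose to recover one factor of~$k+1$ by \emph{merging} copies of a same state of~$A$ ``except where a genuine ambiguity of the last~$k$ input symbols forbids it'', and then to check that the quotient is still $k$-reversible. But merging all states with the same first component collapses~$A'$ back to~$A$, which need not be $k$-reversible, and you give no criterion for which merges are safe, no argument that the absence of the $k$-forbidden pattern in~$A$ survives a partial merge, and no reason why the number of surviving classes should be exactly~$(k+1)^{n-1}$. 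You flag this yourself as ``the delicate part'' without carrying it out; as written it is a restatement of the target, not a proof.

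The paper's device is much simpler and avoids any quotient: it \emph{refines the construction} by dropping the counter attached to the \scc\ of the initial state. The point is that every predecessor (in~$A$) of a state in~$\mathcal{C}_{q_I}$ already lies in~$\mathcal{C}_{q_I}$ (all states being reachable from~$q_I$), so any $k$-irreversibility occurring inside~$\mathcal{C}_{q_I}$ would itself be a $k$-forbidden pattern of~$A$; hence the counter there plays no role in the proof of Lemma~\ref{lemma:kfp} and can be omitted. With this refinement a reachable state is described by a state of~$\mathcal{C}_{q_I}$ (no counter) together with a non-null function $Q\setminus\mathcal{C}_{q_I}\to\{0,1,\dots,k\}$, giving at most $s+s\bigl((k+1)^{n-s}-1\bigr)=s\,(k+1)^{n-s}\le(k+1)^{n-1}$ states, where $s=\#\mathcal{C}_{q_I}$. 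This is the missing idea: save the factor by trimming the construction at the initial \scc, not by an a posteriori identification of equivalent states.
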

	\begin{proof}
		Let~$A$ be a given~$n$-state \dfa\ not containing the~$k$-forbidden
		pattern.
		According to Lemma~\ref{lemma:kfp}, the automaton~$A'$ obtained
		from~$A$ with the above presented construction is~$k$-reversible.
		Now, we are going to estimate the number of reachable states in it.

		First of all, we notice that if~$\langle q,\ell,\alpha\rangle$ is a
		reachable state of~$A'$ and
		$\alpha=((p_1,j_1),(p_2,j_2),\ldots,(p_h,j_h))$, then ${\cal
		C}_{p_1}\prec{\cal C}_{p_2}\prec\cdots\prec{\cal C}_{p_h}\prec{\cal
		C}_{q}$. Hence, since the ordering among states appearing in~$\alpha$
		is given by the ordering of \sccs\ in~$A$, we could represent~$\alpha$
		as a set.

		This also allows to interpret the state~$\langle q,\ell,\alpha\rangle$ as the function $f:Q\rightarrow\{0,1,\ldots,k\}$, such that for~$r\in Q$:
		\[f(r)=\left\{
			\begin{array}{ll}
				\ell& \mbox{if~$r=q$,}\\
				j_i& \mbox{if~$r=p_i$, $1\leq i\leq h$,}\\
				0& \mbox{otherwise.}\\
			\end{array}
		\right. \]
		By counting the number of possible functions, we obtain a~$(k+1)^n$
		upper bound for the number of reachable states in~$A'$.

		Now, we show how to reduce this bound to the one claimed in the
		statement of the theorem.

		The above presented simulation can be slightly refined by observing
		that while simulating states in the \scc\ of the initial
		state~$q_I$, it is not necessary to keep the counter.
		Furthermore, in each state~$\langle q,\ell,\alpha\rangle$ of~$Q'$,
		with~$q\notin{\cal C}_{q_I}$, the first element of~$\alpha$, which
		should represent a state in~${\cal C}_{q_I}$, is stored without the
		counter.
		Hence, the state~$\langle q,\ell,\alpha\rangle$ can be seen as a
		state in~${\cal C}_{q_I}$ (the first element of~$\alpha$) with a
		function~$f:Q\setminus{\cal C}_{q_I}\rightarrow\{0,1,\ldots,k\}$
		(representing the current state with its counter and the other
		pairs in~$\alpha$).
		Since the counter associated with the current state is always
		positive, $f$ cannot be the null function.
		Hence, the number of possible functions is bounded by~$(k+1)^{n-s}-1$,
		where~$s=\#{\cal C}_{q_I}$.
		Considering also the states which are used in~$Q'$ to simulate the
		states in~${\cal C}_{q_I}$, this gives at most~$s + s((k+1)^{n-s}-1)$
		many reachable states.
		For~$k>0$ this amount is bounded by~$(k+1)^{n-1}$.
	\end{proof}

	We point out that for~$k=1$, Theorem~\ref{th:states} gives a~$2^{n-1}$ upper bound, which matches with the bound for the conversion of \dfas\ into equivalent \revdfas, claimed in~\cite{HJK15}.
	In the same paper, a lower bound very close to such an upper bound was presented.

	\section{A characterization of~$k$-reversible languages}
	\label{sec:characterization}

	In this section we present a characterization of $k$-reversible languages based on the notion of~$k$-forbidden pattern.
	This characterization will be obtained by combining Theorem~\ref{th:states} with the following result.

	\begin{lemma}
		\label{lemma:k-fp}
		Let~$L$ be a regular language and~$k$ be a positive integer. If the minimum \dfa\ accepting~$L$ contains the~$k$-forbidden pattern, then $L\notin\krev{k}$.
	\end{lemma}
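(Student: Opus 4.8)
The plan is to assume the contrary and derive a contradiction. So suppose $L\in\krev{k}$ and fix a $k$-reversible \dfa\ $A=(Q,\Sigma,\delta,q_I,F)$ with all useful states and $L(A)=L$. Since $L(A)=L(M)$, the minimum \dfa\ $M$ of $L$ is the quotient of $A$ under state equivalence, so there is a surjective \dfa-homomorphism $\mu$ from $A$ onto $M$: it satisfies $\mu(q_I)=q_I^{M}$, $s\in F\iff\mu(s)\in F_M$, and, because all states are useful, $\delta(s,a)$ is defined exactly when $\delta_M(\mu(s),a)$ is, with $\mu(\delta(s,a))=\delta_M(\mu(s),a)$. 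I fix states $p,q,r$, a symbol $\sigma$ and strings $x,w$ realizing the $k$-forbidden pattern in $M$, and put $\bar p=\delta_M(p,x)$, $\bar q=\delta_M(q,x)$, so $\bar p\neq\bar q$, $\delta_M(\bar p,\sigma)=\delta_M(\bar q,\sigma)=r$ and $\delta_M(r,w)=q$. I will say that a state $t$ of $A$ is \emph{$h$-reachable}, for $h\in\Sigma^{*}$, if $\delta(u,h)=t$ for some $u\in Q$.

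The first tool I would use is a backward-propagation property, obtained directly from Definition~\ref{def:krev} by a downward induction on the position: if $A$ is $k$-reversible and $\delta(s_1,y)=\delta(s_2,y)$ for a string $y$ with $|y|\ge k$, then the two computations on $y$ coincide from the $(k-1)$-th symbol on; in particular $\delta(s_1,h)=\delta(s_2,h)$ when $h$ is the length-$(k-1)$ prefix of $y$. Reformulated: for every $h\in\Sigma^{k-1}$ and every non-empty $e\in\Sigma^{*}$, the map $t\mapsto\delta(t,e)$ is injective on the set of $h$-reachable states of $A$. The second, key, ingredient is a lifting of the forbidden pattern. Consider the string $\zeta=\sigma wx$; in $M$ it loops at $\bar q$ (indeed $\delta_M(\bar q,\zeta)=\delta_M(r,wx)=\delta_M(q,x)=\bar q$) and $\delta_M(\bar p,\zeta)=\bar q$ as well. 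Let $S$ be the finite set of reachable states $s$ of $A$ with $\mu(s)=\bar q$ that are moreover $x$-reachable, and let $h=\delta(\cdot,\zeta)$. Then $h$ maps $S$ into $S$: for $s\in S$ the state $h(s)$ is reachable, its $\mu$-image is $\delta_M(\bar q,\zeta)=\bar q$, and it is $x$-reachable because $\zeta$ ends with $x$. By the injectivity statement above (applied with the history $x$ and the non-empty string $\zeta$), $h$ is injective on $S$, hence — $S$ being finite — it is a \emph{bijection} of $S$.

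Now I bring in $p$. Fix a reachable state $\hat p$ of $A$ with $\mu(\hat p)=p$ and set $P=\delta(\hat p,x)$; then $\mu(P)=\bar p\neq\bar q$, so $P\notin S$, while $h(P)=\delta(\hat p,x\zeta)$ is reachable, has $\mu$-image $\bar q$, and is $x$-reachable, so $h(P)\in S$. Since $h$ is a bijection of $S$, there is $P'\in S$ with $h(P')=h(P)$, and $P'\neq P$ because $P\notin S$. Thus $\delta(P,\zeta)=\delta(P',\zeta)$ with $P\neq P'$ and both $P,P'$ $x$-reachable (the latter since $P'\in S$). I apply the backward-propagation property to $y=\zeta$ (note $|\zeta|=k+|w|\ge k$): the computations of $P$ and $P'$ on $\zeta$ coincide from the $(k-1)$-th symbol on; as they differ at the start, they \emph{first} coincide at some position $j$ with $1\le j\le k-1$ (for $k=1$ this already forces $P=P'$, a contradiction). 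Writing $\zeta=c_1\cdots c_m$, the state $V=\delta(P,c_1\cdots c_j)=\delta(P',c_1\cdots c_j)$ then has two \emph{distinct} $c_j$-predecessors, $t_1=\delta(P,c_1\cdots c_{j-1})$ and $t_2=\delta(P',c_1\cdots c_{j-1})$; and, because $P$ and $P'$ are both $x$-reachable while $j-1<k-1$, both $t_1$ and $t_2$ are $\hat x$-reachable for the \emph{same} string $\hat x\in\Sigma^{k-1}$, namely the length-$(k-1)$ suffix of $x\,c_1\cdots c_{j-1}$. Hence $V$ is a $k$-irreversible state of $A$ (witnessed by $\hat x$ and $c_j$), contradicting the $k$-reversibility of $A$. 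Therefore no such $A$ exists and $L\notin\krev{k}$.

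The creative step, and the one I expect to require the most care, is the choice of the set $S$ and the loop $\zeta$: $k$-reversibility turns the ``return along the loop'' map into an injection, so on the finite set $S$ it must be onto, leaving no room for the extra incoming transition carried over from the $\bar p$-side. The rest is bookkeeping: checking the stated properties of $\mu$, checking that $h$ really lands in $S$, and — in the last paragraph — checking that the merge of the two $\zeta$-computations yields a \emph{bona fide} $k$-irreversibility (two predecessors reachable by one common $(k-1)$-string) rather than a mere coincidence after fewer than $k-1$ symbols. The degenerate sub-case $w=\varepsilon$ (which only makes $r=q$ and $|\zeta|=k$) is absorbed by the same argument, so I do not expect it to need separate treatment.
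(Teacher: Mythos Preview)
Your proof is correct. Both your argument and the paper's rest on the same combinatorial core---pigeonhole on the finitely many $A$-states lying over $\bar q$---but the packaging is different. The paper argues directly: starting from a state of $A'$ equivalent to $p$, it iterates the word $x\sigma w$ to produce sequences $q_0,q_1,\ldots$ (over $q$) and $r_1,r_2,\ldots$ (over $r$), waits for the first repetition, and then splits into three cases ($r_i=r_j$ with $i=1$; $r_i=r_j$ with $i>1$; $q_i=q_j$ with $r_i\neq r_j$) to locate a merge point that is $k$-irreversible. Your version instead isolates two reusable lemmas---the backward-propagation property (equivalently, injectivity of $\delta(\cdot,e)$ on the set of $x$-reachable states when $|x|=k-1$ and $e\neq\varepsilon$) and the ``injective self-map of a finite set is a bijection'' observation applied to $h=\delta(\cdot,\zeta)$ on $S$---and thereby avoids the case analysis entirely: the extra incoming element $P$ from the $\bar p$-side is forced to collide under $h$ with some $P'\in S$, and the merge point on $\zeta$ is then read off uniformly. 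The gain is modularity and the elimination of cases; the paper's version is a little more hands-on and never names the homomorphism $\mu$ or the set $S$ explicitly. One cosmetic remark: since you assume all states of $A$ are useful, the ``reachable'' clause in your definition of $S$ is redundant (every $x$-reachable state is reachable), but this does no harm.
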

	\begin{proof} 
		Let $M=(Q,\Sigma, \delta, q_I, F)$ be the minimum \dfa\ accepting $L$. By hypothesis there exist $p,q,r\in Q$, $\sigma\in\Sigma$, $x\in \Sigma^{k-1}$, $w\in \Sigma^*$ such that $p\neq q$, $\delta(p,x) \neq \delta(q,x)$, $\delta(p,x\sigma)= \delta(q,x\sigma)=r$ and $\delta(r,w)=q$.  Let $s = \delta(p,x)$ and $t = \delta(q,x)$.
		We are going to prove that each \dfa\ $A'=(Q',\Sigma, \delta', q'_I, F')$ accepting $L$ contains a $k$-irreversible state. 

		Let $q_0\in Q'$ be a state equivalent to $p$. In $A'$ we consider two arbitrarily long sequences of states $q_1, q_2, \ldots$ and $r_1, r_2, \ldots$ equivalent to $q$ and $r$, respectively,  such that $\delta'(q_{h-1},x\sigma)=r_h$ and $\delta'(r_h,w)=q_h$, for $h>0$. 
		Since $Q'$ is finite, sooner or later we will find an index $j$ such that either $r_i = r_j$ or $q_i = q_j$, for some $1\le i < j$. Let us take the first $j$ with such property. 
		\begin{itemize}

			\item Suppose $r_i = r_j$.
				If $i=1$, let $\hat{s}=\delta'(q_0,x)$ and $\hat{t}=\delta'(q_{j-1},x)$. Since $q_0$ is equivalent to $p$ and $q_{j-1}$ is equivalent to $q$, $\hat{s}$ and $\hat{t}$ are equivalent to the states $s$ and $t$ of $M$, respectively. So, $\hat{s}\neq \hat{t}$. Furthermore, $\delta'(\hat{s},\sigma)=\delta'(\hat{t},\sigma)=r_1$. Hence, $r_1$ is $k$-irreversible. In the case $i>1$, since $j$ is the first index giving a repetition we get $q_{i-1}\neq q_{j-1}$. We decompose the string $x\sigma$ as $x'\gamma x''$, where $x',x''\in \Sigma^*$, $\gamma\in\Sigma$, and $\delta'(q_{i-1},x') \neq \delta'(q_{j-1},x')$, $\delta'(q_{i-1},x'\gamma)=\delta'(q_{j-1},x'\gamma)=u$ for some $u\in Q'$ and $\delta'\!(u,x'')=r_i$. 
				We observe that $\delta'\!(q_{i-2}, x\sigma wx')=\delta'\!(q_{i-1},x') \neq \delta'\!(q_{j-2}, x\sigma wx')=\delta'\!(q_{j-1},x')$, while $\delta'\!(q_{i-2}, x\sigma wx'\!\gamma) =\delta'\!(q_{j-2}, x\sigma wx'\!\gamma)=u$.
				This implies that the state~$u$ is $|x\sigma wx'\!\gamma|$-irreversible.
				Hence it is $k$-irreversible.

			\item In the case $q_i = q_j$ and $r_i\neq r_j$, we observe that since $q_0$ is equivalent to $p$ and $q_j$ is equivalent to $q$ for $j\ge 1$, while $p$ and $q$ are not equivalent, we get $i>0$. We decompose $w$ as $w'\gamma w''$, where $w',w''\in\Sigma^*$, $\gamma\in\Sigma$ and $\delta'(r_i,w')\neq \delta'(r_j,w')$, $\delta'(r_i,w'\gamma)=\delta'(r_j,w'\gamma)=u$ for some $u\in Q'$ and $\delta'(u,w'')=q_i$.
				Then, $\delta'(q_{i-1},x\sigma w') \neq \delta'(q_{j-1},x\sigma w')$ and $\delta'(q_{i-1},x\sigma w'\gamma)=\delta'(q_{j-1},x\sigma w'\gamma)=u$. Hence, the state $u$ is $|x\sigma w' \gamma|$-irreversible, so it is $k$-irreversible.\qedhere
		\end{itemize}
	\end{proof}
\noindent
	Notice that the condition in Lemma~\ref{lemma:k-fp} is on the \emph{minimum \dfa} accepting the language under consideration. If we remove the requirement that the considered \dfa\ has to be minimum, the statement becomes false.
	For instance, the language $L=a^*$ is reversible even though for each~$k>0$ we can build a \dfa\ accepting it, which contains the~$k$-forbidden pattern (see Figure~\ref{fig:nonmin-k-fp}). 

	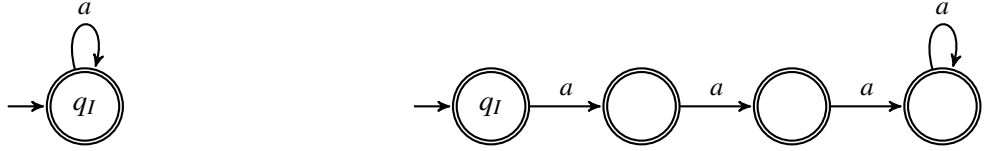
\begin{figure}[tb]
		\centering
		\begin{minipage}{.25\textwidth}
			\centering
			\begin{tikzpicture}[->,>=stealth',shorten >=1pt,auto,node
				distance=2cm, thick]
				\node[initial,initial text=,state,accepting] (q0) {$q_I$};
				\path[every node/.style={font=\sffamily\small}]
					(q0) edge  [loop above] node[anchor=south] {$a$} (q0);
			\end{tikzpicture}
		\end{minipage}\hfill
		\begin{minipage}{.7\textwidth}
			\centering
			\begin{tikzpicture}[->,>=stealth',shorten >=1pt,auto,node distance=2cm,
				thick]

				\node[initial,initial text=,state,accepting] (q0) {$q_I$};
				\node[state,accepting] (q1) [right of=q0]	{};
				\node[state,accepting] (q2) [right of=q1]	{};
				\node[state,accepting] (q3) [right of=q2]	{};

				\path[every node/.style={font=\sffamily\small}]
					(q0) edge  [] node[] {$a$} (q1)
					(q1) edge  [] node[] {$a$} (q2)
					(q2) edge  [] node[] {$a$} (q3)
					(q3) edge  [loop above] node[] {$a$} (q3);
			\end{tikzpicture}
		\end{minipage}
		\vskip1em
		\caption{
			The minimum \dfa\ accepting the reversible language $a^*$, and an
			equivalent \dfa\ containing the $3$-forbidden pattern
		}
		\label{fig:nonmin-k-fp}
	\end{figure}
	We are now able to characterize $k$-reversible languages in terms of the
	structure of minimum \dfas: 

	\begin{theorem}
		\label{th:kforb}
		Let~$L$ be a regular language. Given~$k>0$, $L\in\krev{k}$ if and only if the minimum \dfa\ accepting~$L$ does not contain the~$k$-forbidden pattern.
	\end{theorem}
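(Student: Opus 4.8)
The plan is to observe that the theorem follows immediately by combining the two halves that have already been established, one for each direction of the biconditional; there is essentially no new argument to make.

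For the direction ``if the minimum \dfa\ accepting~$L$ does not contain the~$k$-forbidden pattern, then $L\in\krev{k}$'', I would simply invoke the construction of Section~\ref{sec:simulation}. Let~$M$ be the minimum \dfa\ accepting~$L$. Since~$M$ does not contain the~$k$-forbidden pattern, Lemma~\ref{lemma:kfp} tells us that the \dfa~$A'$ built from~$M$ is~$k$-reversible, and by construction $A'$ is equivalent to~$M$, hence $L(A')=L$. By Definition~\ref{def:krev} this witnesses~$L\in\krev{k}$. (Only the existence of an equivalent~$k$-reversible \dfa\ is needed here; the explicit state bound of Theorem~\ref{th:states} is not required for this direction, although one could cite Theorem~\ref{th:states} directly.)

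For the converse, ``if $L\in\krev{k}$ then the minimum \dfa\ accepting~$L$ does not contain the~$k$-forbidden pattern'', I would argue by contraposition: if the minimum \dfa\ accepting~$L$ contains the~$k$-forbidden pattern, then Lemma~\ref{lemma:k-fp} yields $L\notin\krev{k}$. This is exactly the content of that lemma, so no further work is needed; taking the contrapositive gives the desired implication.

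Since both implications are already available as previously proved results, there is no genuine obstacle in this proof — the substance lies in Lemma~\ref{lemma:kfp}, Theorem~\ref{th:states}, and Lemma~\ref{lemma:k-fp}. The one point to be careful about is the role of minimality: Lemma~\ref{lemma:k-fp} speaks specifically about the \emph{minimum} \dfa\ (and, as the example of~$a^*$ in Figure~\ref{fig:nonmin-k-fp} shows, the forbidden-pattern characterization fails for arbitrary \dfas), so in the first direction the construction must likewise be applied to the minimum \dfa. With that caveat noted, concatenating the two implications gives the claimed equivalence.
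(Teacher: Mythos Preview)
Your proposal is correct and matches the paper's own proof, which is equally brief: the \emph{if} part is cited as a consequence of Theorem~\ref{th:states} (you invoke Lemma~\ref{lemma:kfp} instead, which is the same content without the state bound), and the \emph{only-if} part is the contrapositive of Lemma~\ref{lemma:k-fp}. Your additional remark about why minimality matters is apt and consistent with the discussion surrounding Figure~\ref{fig:nonmin-k-fp}.
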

	\begin{proof}
		The \emph{if} part is a consequence of Theorem~\ref{th:states},
		the \emph{only-if} part derives from Lemma~\ref{lemma:k-fp}.
	\end{proof}

	From Theorem~\ref{th:kforb}, we observe that to transform each \dfa~$A$
	accepting a~$k$-reversible language into an equivalent \krevdfa{k}, firstly
	we can transform~$A$ into the equivalent minimum \dfa~$M$ and then we can
	apply to~$M$ the construction presented in Section~\ref{sec:simulation}.

	As a consequence of Theorem~\ref{th:kforb} we also obtain:

	\begin{corollary}
		\label{cor:hforb}
		$L\in\krev{k+1}\setminus\krev{k}$ if and only if the maximum~$h$ such
		that the minimum \dfa\ accepting~$L$ contains the~$h$-forbidden pattern
		is~$k$.
	\end{corollary}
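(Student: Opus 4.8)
The plan is to derive this corollary directly from Theorem~\ref{th:kforb} together with the monotonicity observation recorded right after Definition~\ref{def:kforb}: if the minimum \dfa\ accepting~$L$ contains the $h$-forbidden pattern, then it contains the $h'$-forbidden pattern for every~$h'$ with~$0 < h' < h$. Consequently the set~$H$ of positive integers~$h$ for which the minimum \dfa\ accepting~$L$ contains the $h$-forbidden pattern is downward closed, so whenever~$H$ has a maximum element~$k$ we in fact have~$H=\{1,\dots,k\}$. This reduces the statement to a pair of one-line applications of Theorem~\ref{th:kforb}.

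For the forward direction I would assume $L\in\krev{k+1}\setminus\krev{k}$. From $L\in\krev{k+1}$ and Theorem~\ref{th:kforb} applied with parameter~$k+1$, the minimum \dfa\ does not contain the $(k+1)$-forbidden pattern; by the monotonicity remark (in contrapositive form) it therefore contains no $h$-forbidden pattern for any~$h\ge k+1$. From $L\notin\krev{k}$ and Theorem~\ref{th:kforb} applied with parameter~$k$, the minimum \dfa\ does contain the $k$-forbidden pattern. Hence $k\in H$ while $h\notin H$ for all $h>k$, i.e. $k=\max H$. For the converse I would assume $k=\max H$. Then $k\in H$, so the minimum \dfa\ contains the $k$-forbidden pattern and Theorem~\ref{th:kforb} yields $L\notin\krev{k}$; and $k+1\notin H$, so the minimum \dfa\ does not contain the $(k+1)$-forbidden pattern and Theorem~\ref{th:kforb} yields $L\in\krev{k+1}$. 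Combining the two gives $L\in\krev{k+1}\setminus\krev{k}$.

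There is essentially no obstacle here; the only point worth making explicit is that the phrase ``the maximum~$h$ such that the minimum \dfa\ contains the $h$-forbidden pattern is~$k$'' is being used, in both directions, as shorthand for the conjunction ``$k\in H$ and $h\notin H$ for every $h>k$'', and that the downward-closure of~$H$ is what makes this reading unambiguous. Once that is noted, the argument is a direct two-line invocation of Theorem~\ref{th:kforb} in each direction, so I would keep the written proof correspondingly short.
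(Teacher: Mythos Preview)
Your proposal is correct and follows exactly the approach the paper intends: the corollary is stated there simply ``as a consequence of Theorem~\ref{th:kforb}'' with no written proof, and your argument supplies precisely the expected details, invoking Theorem~\ref{th:kforb} in both directions together with the monotonicity remark after Definition~\ref{def:kforb}.
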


	In the following result we present further families of languages, besides
	that in Example~\ref{ex:prel}, which witness the existence of the proper
	infinite hierarchy
	\[\rev=\krev{1}\subset\krev{2}\subset\cdots\subset\krev{k}\subset\cdots\]
	Furthermore, we show that the difference between the ``amount'' of
	irreversibility in a minimum \dfa\ and in the accepted language can be
	arbitrarily large:

	\begin{theorem}
		For all integers~$k,j>0$ with~$j>k>1$ there exists a language~$L_{k,j}$
		such that:
		\begin{itemize}
			\item The minimum \dfa\ accepting~$L_{k,j}$ is a~\krevdfa{j} but
				not a~\krevdfa{j-1}.
			\item $L_{k,j}\in\krev{k}\setminus\krev{k-1}$.
		\end{itemize}
	\end{theorem}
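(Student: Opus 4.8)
The plan is to realise \(L_{k,j}\) as a union \(L_{\mathcal A}\cup L_{\mathcal B}\) of two languages over disjoint alphabets, letting \(L_{\mathcal A}\) carry the ``language'' degree of irreversibility (it forces \(L_{k,j}\in\krev{k}\setminus\krev{k-1}\)) and \(L_{\mathcal B}\) the ``automaton'' degree (it forces the minimum \dfa\ to be a \krevdfa{j} but not a \krevdfa{j-1}), the point being that \(L_{\mathcal B}\) is chosen so that its irreversibility is not of the forbidden‑pattern kind. Concretely I take \(L_{\mathcal A}=a^*b^{\,k-1}b^*\) over \(\{a,b\}\): by Example~\ref{ex:prel} (with parameter \(k-1\)) it lies in \(\krev{k}\setminus\krev{k-1}\), its minimum \dfa\ \(M_{\mathcal A}\) is a \krevdfa{k} but not a \krevdfa{k-1}, and by Corollary~\ref{cor:hforb} the largest \(h\) for which \(M_{\mathcal A}\) contains the \(h\)-forbidden pattern is \(k-1\). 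I take \(L_{\mathcal B}=\{c^{\,j-1}d,\ c^{\,j-2}d\}\) over \(\{c,d\}\): its minimum \dfa\ \(M_{\mathcal B}\) is a \(c\)-path \(q_0\to q_1\to\cdots\to q_{j-1}\) whose two last states each carry a \(d\)-transition into one accepting sink \(f\); the unique irreversible state is \(f\), and a short distance computation (with \(x=c^{\,j-2}\), \(\sigma=d\): \(\delta(q_0,x)=q_{j-2}\neq q_{j-1}=\delta(q_1,x)\), while no word of length \(j-1\) reaches \(q_{j-2}\) from any state) shows \(f\) is \((j-1)\)-irreversible but not \(j\)-irreversible, so \(M_{\mathcal B}\) is a \krevdfa{j} but not a \krevdfa{j-1}. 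Since \(L_{\mathcal B}\) is finite, the useful part of \(M_{\mathcal B}\) is acyclic, hence contains no forbidden pattern of any order (a pattern with \(\delta(q,x\sigma)=r\) and \(\delta(r,w)=q\) would close a cycle); in particular \(L_{\mathcal B}\in\rev\).

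I then set \(L_{k,j}=L_{\mathcal A}\cup L_{\mathcal B}\) over \(\{a,b,c,d\}\) and identify its minimum \dfa\ \(M\). Because the sub‑alphabets are disjoint, for nonempty \(w\) one has \(w^{-1}L_{k,j}=w^{-1}L_{\mathcal A}\) when \(w\) starts in \(\{a,b\}\), \(=w^{-1}L_{\mathcal B}\) when \(w\) starts in \(\{c,d\}\), and \(=\emptyset\) otherwise; the residuals of \(M_{\mathcal A}\) and of \(M_{\mathcal B}\) are pairwise distinct (two of them could coincide only if both were \(\emptyset\), which is the removed dead state, or \(\{\varepsilon\}\), which occurs in \(M_{\mathcal B}\) but, since \(k-1\ge1\), not in \(M_{\mathcal A}\)); and \(L_{k,j}\) itself is a fresh residual. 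Hence \(M\) consists of disjoint copies of \(M_{\mathcal A}\) and \(M_{\mathcal B}\) together with a new non‑accepting initial state \(\iota\) of in‑degree \(0\) whose outgoing edges lead into the depth‑one residual classes of the two parts, and there is no transition between the two parts.

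Finally I read the two assertions off \(M\). For irreversibility: the only effect of the edges out of \(\iota\) is to add a second incoming \(a\)-edge at the root of the \(M_{\mathcal A}\)-part and a second incoming \(b\)-edge at its first \(b\)-successor; since \(\iota\) has in‑degree \(0\) these states become \(1\)-irreversible but not \(2\)-irreversible, and no state of the \(M_{\mathcal B}\)-part is affected. Thus the states of \(M\) that are \(h\)-irreversible for some \(h\ge2\) are exactly \(q_{k-1}\) (largest such \(h\) equal to \(k-1\)) and \(f\) (largest such \(h\) equal to \(j-1\)); as \(j>k>1\), \(M\) is a \krevdfa{j} but not a \krevdfa{j-1}, which is the first bullet. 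For forbidden patterns: the \(M_{\mathcal B}\)-part is acyclic with no outgoing edges, so hosts no pattern; and in the \(M_{\mathcal A}\)-part the new edges add only order‑\(1\) patterns (at the two \(1\)-irreversible states they enter) but no pattern of order \(>k-1\), because on the letters \(\{a,b\}\) the state \(\iota\) behaves exactly like the \(M_{\mathcal A}\)-root — in particular \(\iota\) lies at \(b\)-distance exactly \(k-1\), not \(k\), from \(q_{k-1}\). Hence the largest order of a forbidden pattern of \(M\) is \(k-1\), and Corollary~\ref{cor:hforb} yields \(L_{k,j}\in\krev{k}\setminus\krev{k-1}\), the second bullet. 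The step I expect to have to argue most carefully is precisely this last one: that the fresh initial state cannot be used as the endpoint \(p\) of a forbidden pattern of order larger than \(k-1\), which comes down to the in‑degree‑zero remark and the \(b\)-distance bookkeeping just noted.
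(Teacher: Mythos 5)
Your proof is correct, but it takes a genuinely different route from the paper's. The paper builds a single connected binary-alphabet automaton $A_{k,j}$ with two parallel paths of length $j-1$ that merge, on the letter $b$, into one accepting state carrying a $b$-loop; the depth at which the two paths become indistinguishable by suffixes controls the automaton's degree of irreversibility ($j$), while the position where the paths switch from $a$'s to $b$'s controls the forbidden-pattern order and hence the language's degree ($k$). You instead decouple the two requirements: a finite language $L_{\mathcal B}$ over $\{c,d\}$ whose minimum \dfa\ is $(j-1)$-irreversible at its accepting sink but, being acyclic, hosts no forbidden pattern of any order (so $L_{\mathcal B}\in\rev$), united with the hierarchy witness $a^*b^{k-1}b^*$ from Example~\ref{ex:prel}, which supplies the $(k-1)$-forbidden pattern; Corollary~\ref{cor:hforb} then gives the second bullet and a direct state-by-state check gives the first. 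This is a clean conceptual separation of ``automaton irreversibility without forbidden patterns'' from ``language irreversibility,'' at the cost of a four-letter alphabet where the paper uses two (a point the paper cares about, since its Conclusion emphasizes binary witnesses). Two small imprecisions to fix, neither fatal: (i) the minimum \dfa\ of the union does \emph{not} contain a full disjoint copy of $M_{\mathcal B}$ --- the residual $L_{\mathcal B}$ is not the residual of any nonempty word, so the root $q_0$ of $M_{\mathcal B}$ disappears and the fresh initial state $\iota$ absorbs its role on the letter $c$ (your later analysis is consistent with this correct picture, since $\iota$, like $q_0$, has in-degree $0$ and sits at $c$-distance $j-2$ from $q_{j-2}$, which is exactly what keeps $f$ $j$-reversible); (ii) the claim that the states which are $h$-irreversible for some $h\ge 2$ are ``exactly'' the two sinks fails for $k=2$, where the $\mathcal A$-sink is only $1$-irreversible, though the conclusion is unaffected because $j-1\ge 2$ still holds.
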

	\begin{proof}
		Let~$L_{k,j}$ be the language accepted by the automaton~$A_{k,j} = (Q,
		\Sigma, \delta, q_I, F)$ where~$\Sigma = \{a,b\}$,
		$Q = \{q_I, q'_1, q''_1, \ldots, q'_{j-1}, q''_{j-1}, q_j\}$, $F=
		\{q''_{j-1},q_j\}$, and the transition function is defined as follows
		(see Figure~\ref{fig:L5,7} for an example):
		\begin{itemize}
			\item $\delta(q_I, a) = q'_1$
			\item $\delta(q_I, b) = q''_1$
			\item $\delta(q'_i,a)=q'_{i+1}$ and~$\delta(q''_i,a)= q''_{i+1}$
				for~$1\leq i \leq j-k$
			\item $\delta(q'_i,b)=q'_{i+1}$ and~$\delta(q''_i,b)= q''_{i+1}$
				for~$j-k < i < j-1$
			\item $\delta(q'_{j-1},b) = \delta(q''_{j-1},b) =
				\delta(q''_{j-1},a) = \delta(q_j,b)= q_j$
		\end{itemize}

		Firstly, we can observe that~$A_{k,j}$ is the minimum \dfa\ 
		accepting~$L_{k,j}$.
		It contains only one irreversible state, $q_j$, with~$\delta^R(q_j,b)
		= \{q_j,q'_{j-1},q''_{j-1}\}$.
		We also notice that~$\delta(q'_1, a^{j-k}b^{k-2}) = q'_{j-1} \neq
		q''_{j-1} = \delta(q''_1, a^{j-k}b^{k-2})$, while~$\delta(q'_1,
		a^{j-k}b^{k-1}) = \delta(q''_1, a^{j-k}b^{k-1}) = q_j$.
		Hence~$A_{k,j}$ is not a \krevdfa{j-1}.
		However, the knowledge of one more symbol in the suffix of the input
		read to enter~$q_j$ allows to determine the state of the automaton
		before reading the last symbol.
		In particular, if the suffix of length~$j$ is~$a^{j-k+1}b^{k-1}$, then
		the state was~$q'_{j-1}$; if the suffix is~$ba^{j-k}b^{k-1}$
		or~$ba^{j-k}b^{k-2}a$, then the state was~$q''_{j-1}$; in the remaining
		cases it was~$q_j$.
		Hence, $A_{k,j}$ is a \krevdfa{j}.

		To prove that~$L_{k,j}\in\krev{k}\setminus\krev{k-1}$, we first show
		that~$A_{k,j}$ contains the~$(k-1)$-forbidden pattern.
		To this aim, in Definition~\ref{def:kforb} we can choose~$q=r=q_j$,
		$p=q'_{j-k+1}$, $\sigma=b$, $x=b^{k-2}$ and~$w = \varepsilon$.
		Furthermore, it is possible to obtain a \krevdfa{k}~$A'_{k,j}$
		equivalent to~$A_{k,j}$ by duplicating~$q_j$ with its loop and by
		redistributing incoming transitions from~$q'_{j-1}$ and~$q''_{j-1}$, as
		in the case presented in Figure~\ref{fig:A'5,7}.
	\end{proof}

	\begin{figure}[tb]
		\centering
		\begin{tikzpicture}[->, >=stealth', shorten >=1pt, auto, thick, node
			distance=1.75cm]
			\node[initial,initial text=,state] (q0) {$q_I$};
			\node[state] (q1) [above right of=q0] {$q'_1$};
			\node[state] (q2) [below right of=q0] {$q''_1$};
			\node[state] (q3) [right of=q1]	{$q'_2$};
			\node[state] (q4) [right of=q2]	{$q''_2$};
			\node[state] (q5) [right of=q3]	{$q'_3$};
			\node[state] (q6) [right of=q4]	{$q''_3$};
			\node[state] (q7) [right of=q5]	{$q'_4$};
			\node[state] (q8) [right of=q6]	{$q''_4$};
			\node[state] (q9) [right of=q7]	{$q'_5$};
			\node[state] (q10) [right of=q8]	{$q''_5$};
			\node[state] (q11) [right of=q9]	{$q'_6$};
			\node[state,accepting] (q12) [right of=q10]	{$q''_6$};
			\node[state,accepting] (q13) [below right of=q11]	{$q_7$};

			\path[every node/.style={font=\sffamily\small}]
				(q13) edge  [loop below] node {$b$} ()
				(q12) edge  [] node {$a,b$} (q13)
				(q10) edge  [] node {$b$} (q12)
				(q8) edge  [] node {$b$} (q10)
				(q6) edge  [] node {$b$} (q8)
				(q4) edge  [] node {$a$} (q6)
				(q2) edge  [] node {$a$} (q4)
				(q11) edge  [] node {$b$} (q13)
				(q9) edge  [] node {$b$} (q11)
				(q7) edge  [] node {$b$} (q9)
				(q5) edge  [] node {$b$} (q7)
				(q3) edge  [] node {$a$} (q5)
				(q1) edge  [] node {$a$} (q3)
				(q0) edge  [] node {$b$} (q2)
				(q0) edge  [] node {$a$} (q1);
		\end{tikzpicture}
		\caption{The minimum automaton~$A_{5,7}$ accepting the language~$L_{5,7}$}
		\vspace{0.75em}
		\label{fig:L5,7}
	\end{figure}
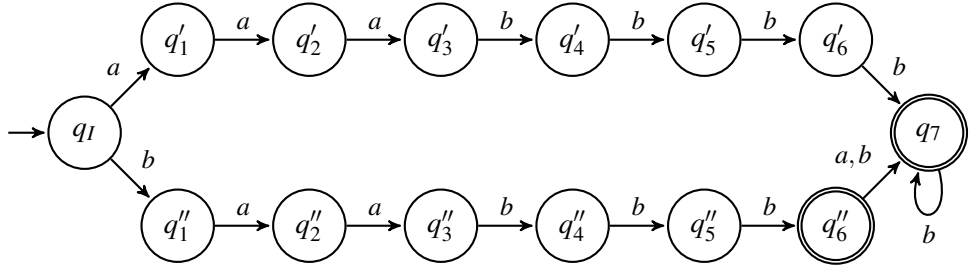
	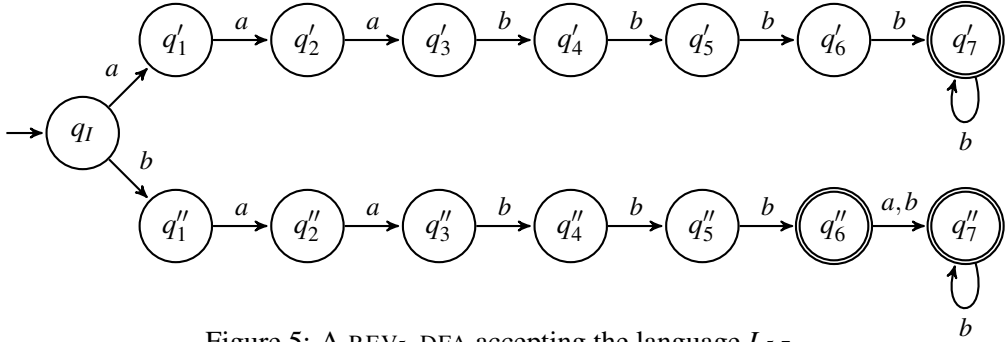
\begin{figure}[tb]
		\centering
		\begin{tikzpicture}[->,>=stealth',shorten >=1pt,auto,node distance=1.75cm,thick]
			\node[initial,initial text=,state] (q0) {$q_I$};
			\node[state] (q1) [above right of=q0] {$q'_1$};
			\node[state] (q2) [below right of=q0] {$q''_1$};
			\node[state] (q3) [right of=q1]	{$q'_2$};
			\node[state] (q4) [right of=q2]	{$q''_2$};
			\node[state] (q5) [right of=q3]	{$q'_3$};
			\node[state] (q6) [right of=q4]	{$q''_3$};
			\node[state] (q7) [right of=q5]	{$q'_4$};
			\node[state] (q8) [right of=q6]	{$q''_4$};
			\node[state] (q9) [right of=q7]	{$q'_5$};
			\node[state] (q10) [right of=q8]	{$q''_5$};
			\node[state] (q11) [right of=q9]	{$q'_6$};
			\node[state,accepting] (q12) [right of=q10]	{$q''_6$};
			\node[state,accepting] (q13) [right of=q11]	{$q'_7$};
			\node[state,accepting] (q14) [right of=q12]	{$q''_7$};

			\path[every node/.style={font=\sffamily\small}]
			(q14) edge  [loop below] node {$b$} ()
			(q13) edge  [loop below] node {$b$} ()
			(q12) edge  [] node {$a,b$} (q14)
			(q10) edge  [] node {$b$} (q12)
			(q8) edge  [] node {$b$} (q10)
			(q6) edge  [] node {$b$} (q8)
			(q4) edge  [] node {$a$} (q6)
			(q2) edge  [] node {$a$} (q4)
			(q11) edge  [] node {$b$} (q13)
			(q9) edge  [] node {$b$} (q11)
			(q7) edge  [] node {$b$} (q9)
			(q5) edge  [] node {$b$} (q7)
			(q3) edge  [] node {$a$} (q5)
			(q1) edge  [] node {$a$} (q3)
			(q0) edge  [] node {$b$} (q2)
			(q0) edge  [] node {$a$} (q1);
		\end{tikzpicture}
		\vspace{-2em}
		\caption{A \krevdfa{5} accepting the language~$L_{5,7}$}
		\label{fig:A'5,7}
	\end{figure}

	\section{Weakly and strongly irreversible languages}
	\label{sec:strongly}
	By Definition~\ref{def:krev}, a language is \emph{weakly irreversible} if
	it is~$k$-reversible for some~$k>0$, namely if it is in the
	class~$\bigcup_{k>0}\krev{k}$.
	A natural question is whether or not the class of weakly irreversible
	languages coincides with the class of regular languages.
	In this section we will give a negative answer to this question, thus
	proving the existence of strongly irreversible languages.

	First of all, we observe that, by Theorem~\ref{th:kforb}, a regular language is \emph{strongly irreversible} if and only if the minimum \dfa\ accepting it contains a~$k$-forbidden pattern for each~$k>0$. Using a combinatorial argument, we now prove that in order to decide if a language is strongly or weakly irreversible, it is enough to consider only a value of~$k$ which depends on the size of the minimum \dfa:

	\begin{theorem}
		\label{th:longforb}
		Let~$A=(Q,\Sigma, \delta, q_I,F)$ be an $n$-state \dfa\ and $N>\frac{n^2-n}{2}$. If~$A$ contains an $N$-forbidden pattern, then it contains a $k$-forbidden pattern for each $k> 0$.
	\end{theorem}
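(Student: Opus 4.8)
The plan is to show that an $N$-forbidden pattern, with $N$ large enough, can be "stretched" into a $k$-forbidden pattern for arbitrarily large $k$ by pumping inside the two incoming paths. Recall that an $N$-forbidden pattern consists of states $p,q,r$, a symbol $\sigma$, strings $x\in\Sigma^{N-1}$ and $w\in\Sigma^*$ with $p\neq q$, $\delta(p,x)\neq\delta(q,x)$, $\delta(p,x\sigma)=\delta(q,x\sigma)=r$, and $\delta(r,w)=q$. The key observation is that while reading $x$ from $p$ and from $q$ in parallel, we traverse a sequence of $N$ \emph{pairs} of states of $A$; since the number of ordered pairs of \emph{distinct} states is $n(n-1)$ and the number of pairs overall is $n^2$, once $N$ is sufficiently large (the stated bound $N>\frac{n^2-n}{2}$ is what the authors use, presumably after a more careful accounting that also exploits the loop via $w$) two of these pairs must coincide, say at positions $i<j$ along $x$. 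Write $x=u v z$ where $u$ leads from $(p,q)$ to the repeated pair, $v$ is the loop ($\delta(\delta(p,u),v)=\delta(p,u)$ and likewise for $q$), and $z$ is the remainder, with $uvz=x$.

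First I would make precise the claim that the repeated pair can be taken with \emph{distinct} components: the pair $(\delta(p,x),\delta(q,x))$ at the end of $x$ has distinct components by hypothesis, and tracking backwards, the first time (reading from $p$, $q$) the two runs merge into a single state they stay merged thereafter (determinism), but then they could never separate again to give $\delta(p,x)\neq\delta(q,x)$; hence the two runs along $x$ keep distinct components \emph{throughout}, so any repeated pair automatically has distinct components. This removes the only subtlety in the pigeonhole step and is why the relevant count is $n(n-1)$ (and why a bound like $\frac{n^2-n}{2}$ suffices once one is a little clever — I would check whether the authors also fold the $r\!\to\!q$ loop into the counting to halve it, or whether $\frac{n^2-n}{2}$ comes from only needing a repetition, not a fresh one, combined with the fact that the two states can be unordered).

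Next I would verify that pumping $v$ preserves the pattern. For any $m\geq 0$, set $x_m = u\,v^m\,z \in \Sigma^{(N-1)+(m-1)|v|}$ and keep the same $\sigma$, the same $r$, and the same $w$ (so the back-loop $\delta(r,w)=q$ is untouched). Then $\delta(p,x_m)=\delta(p,x)=s$ and $\delta(q,x_m)=\delta(q,x)=t$ with $s\neq t$ (using the loop property of $v$ on both runs), and $\delta(p,x_m\sigma)=\delta(s,\sigma)=r=\delta(t,\sigma)=\delta(q,x_m\sigma)$, and $p\neq q$ still holds. Thus $(p,q,r,\sigma,x_m,w)$ is a $k$-forbidden pattern for $k = N + (m-1)|v|$. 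Since $|v|\geq 1$, as $m\to\infty$ the value $k$ takes arbitrarily large values; and by the remark following Definition~\ref{def:kforb}, containing a $k$-forbidden pattern implies containing a $k'$-forbidden pattern for every $0<k'\leq k$. Hence $A$ contains a $k$-forbidden pattern for every $k>0$.

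The main obstacle I anticipate is pinning down the exact counting constant: getting a repeated \emph{pair of states} along $x$ only needs $N-1 > n^2$ naively, so obtaining the sharper $N>\frac{n^2-n}{2}$ requires both the distinctness argument above (giving $n(n-1)$ ordered pairs, $\binom{n}{2}$ unordered) \emph{and} an argument that the two runs may be treated as an unordered pair — which is legitimate here because the roles of $p$ and $q$ in the definition are symmetric except for the clause $\delta(r,w)=q$; I would handle this by noting that if the merge/repetition involves the branch that must connect back to $q$, we can always re-read $w$ appropriately, or simply observe that a repeated unordered pair still yields a genuine loop on \emph{each} of the two runs. The rest of the argument — the loop-pumping and the monotonicity in $k$ — is routine and should be stated in a couple of lines.
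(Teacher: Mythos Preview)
Your approach matches the paper's: track the pair of states along the two $x$-runs, use pigeonhole to find a repeated pair, pump the segment between repetitions, and then use monotonicity in $k$. Your argument that $p_i\neq q_i$ for all $i$ (by determinism, arguing backward from $\delta(p,x)\neq\delta(q,x)$) is exactly what the paper does.

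The one point you leave unresolved is precisely the one you flag: why $\frac{n^2-n}{2}$ rather than $n^2-n$. The paper's resolution is the squaring trick, which you gesture at (``a repeated unordered pair still yields a genuine loop on each of the two runs'') but do not spell out, and your pumping formula $x_m=u\,v^m\,z$ would \emph{not} work as written in the swapped case. Concretely: with $N>\frac{n^2-n}{2}$ pairs $(p_0,q_0),\ldots,(p_{N-1},q_{N-1})$, each with distinct components, two of them must agree as \emph{unordered} pairs, so either $(p_i,q_i)=(p_j,q_j)$ or $(p_i,q_i)=(q_j,p_j)$. In the second case, the segment $v=\sigma_{i+1}\cdots\sigma_j$ is not a loop at $p_i$ (it sends $p_i$ to $q_i$ and $q_i$ to $p_i$), so $\delta(p,u v^m z)$ depends on the parity of $m$. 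The fix is to pump $v^2$ instead: in both cases $\delta(p_i,v^2)=p_i$ and $\delta(q_i,v^2)=q_i$, so the paper takes $z_h=\sigma_1\cdots\sigma_i\,(\sigma_{i+1}\cdots\sigma_j)^{2h}\,\sigma_{j+1}\cdots\sigma_{N-1}$ and obtains $\delta(p,z_h)=p_{N-1}$, $\delta(q,z_h)=q_{N-1}$ for every $h$. With this adjustment your proof is complete and coincides with the paper's.
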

	\begin{proof}
		Suppose that~$A$ contains an~$N$-forbidden pattern.
		As observed after Definition~\ref{def:kforb}, $A$ contains a $k$-forbidden pattern for each $k\le N$.

		We now prove the same for $k>N$. By hypothesis there exist $p,q,r\in Q$, $\sigma\in\Sigma$, $x\in \Sigma^{N-1}$, $w\in \Sigma^*$, such that $p\neq q$, $\delta(p,x) \neq \delta(q,x)$, $\delta(p,x\sigma)= \delta(q,x\sigma)=r$, and $\delta(r,w)=q$. Let $x= \sigma_1\sigma_2\cdots\sigma_{N-1}$ with $\sigma_i\in \Sigma$, for $i=1,\ldots, N-1$.  
		Moreover, let $p_0,\ldots, p_{N-1}, q_0,\ldots, q_{N-1}\in Q$ be such that $p=p_0$, $q=q_0$, $p_i = \delta(p_{i-1}, \sigma_i)$, $q_i = \delta(q_{i-1}, \sigma_i)$ for $i=1,\ldots,N-1$, and $\delta(p_{N-1},\sigma)=\delta(q_{N-1},\sigma)=r$. 
		Since~$p_{N-1}\neq q_{N-1}$ and $A$ is deterministic, we get~$p_i\neq q_i$ for $i=0,\ldots, N-1$.  
		Notice that there are $n^2-n$ possible pairs of different states. 

		We consider the pairs $(p_0,q_0), \ldots, (p_{N-1},q_{N-1})$.
		Since~$N>(n^2-n)/2$ and $p_i\neq q_i$, for $i=0,\ldots, N-1$, there are two  indices $i,j$, $0\le i< j \le N-1$ such that either $(p_i,q_i)=(p_j,q_j)$ or $(p_i,q_i)=(q_j,p_j)$. 
		So $\delta(p_i, (\sigma_{i+1}\cdots\sigma_{j})^2)=p_i$ and $\delta(q_i, (\sigma_{i+1}\cdots\sigma_{j})^2)=q_i$. 
		Given $h>0$, we consider the string $z_h = \sigma_1\cdots\sigma_i (\sigma_{i+1}\cdots\sigma_j)^{2h} \sigma_{j+1}\cdots\sigma_{N-1}$. We can verify that $\delta(p,z_h)=p_{N-1}$ and $\delta(q,z_h)= q_{N-1}$. This implies that $A$ contains the $|z_h|+1$-forbidden pattern.
		Since~$i\neq j$, by properly choosing $h$, this allows us to obtain a $k$-forbidden pattern for each arbitrarily large $k$. 
	\end{proof}

	Combining Theorem~\ref{th:kforb} with Theorem~\ref{th:longforb} we obtain:

	\begin{corollary}
		\label{cor:strongly-irrev}
		Let~$L$ be a regular language whose minimum~\dfa\ has~$n$ states.
		Then~$L$ is strongly irreversible if and only if it is
		not~$(\frac{n^2-n}{2}+1)$-reversible.
	\end{corollary}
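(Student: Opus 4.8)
The plan is to obtain the corollary directly from the two characterization results already established, taking $N=\frac{n^2-n}{2}+1$ as the threshold. Let $M$ denote the minimum \dfa\ accepting $L$; by hypothesis it has $n$ states, so $N$ is the least integer strictly greater than $\frac{n^2-n}{2}$, and hence Theorem~\ref{th:longforb} applies to $M$ with this value of $N$. The whole argument then amounts to bracketing the combinatorial step of Theorem~\ref{th:longforb} between two applications of Theorem~\ref{th:kforb}.

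For the forward implication, suppose $L$ is strongly irreversible. By definition $L\notin\krev{k}$ for every $k>0$; in particular $L\notin\krev{N}$, i.e., $L$ is not $N$-reversible. Nothing more is needed for this direction.

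For the converse, suppose $L$ is not $N$-reversible, i.e., $L\notin\krev{N}$. By the \emph{only-if} part of Theorem~\ref{th:kforb}, $M$ contains the $N$-forbidden pattern. Since $N>\frac{n^2-n}{2}$, Theorem~\ref{th:longforb} then yields that $M$ contains a $k$-forbidden pattern for \emph{every} $k>0$. Applying Theorem~\ref{th:kforb} once more, in the other direction, we conclude $L\notin\krev{k}$ for each $k>0$, that is, $L$ is weakly irreversible for no $k$, i.e., $L$ is strongly irreversible.

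I do not expect a genuine obstacle here: all the content has been absorbed into Theorem~\ref{th:longforb}. The only points requiring a line of care are (i) recording that, via Theorem~\ref{th:kforb}, ``$L$ strongly irreversible'' is equivalent to ``$M$ contains the $k$-forbidden pattern for all $k>0$'', which makes the two uses of that characterization fit cleanly around the pumping step, and (ii) observing that $\frac{n^2-n}{2}+1$ is exactly the smallest integer admissible in Theorem~\ref{th:longforb}, so it is the natural quantity to plug in and no slack is lost.
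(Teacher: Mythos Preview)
Your proof is correct and follows exactly the approach the paper intends: the corollary is stated immediately after the sentence ``Combining Theorem~\ref{th:kforb} with Theorem~\ref{th:longforb} we obtain'', and your argument is precisely that combination spelled out. The only superfluous remark is point~(ii) about $\frac{n^2-n}{2}+1$ being the smallest admissible integer---true, but not needed for the proof itself.
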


	We now present an example of strongly irreversible language.

	\begin{example}
		The language~$L=a^*b(a+b)^*$ is strongly irreversible.
		The minimum automaton accepting it has~$2$ states (see
		Figure~\ref{fig:strongly-irrev}).
		We notice that~$\delta(q_I, ab) = \delta(p, ab) = p$,
		while~$\delta(q_I, a) \neq \delta(p,a)$.
		This defines a~$2$-forbidden pattern.
		According to Corollary~\ref{cor:strongly-irrev}, this implies that~$L$
		is strongly irreversible.
		Observe that entering in~$p$ with each string~$a^kb$, we have
		a~$(k+1)$-forbidden pattern, for any~$k \geq 0$.
	\end{example}

	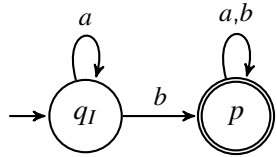
\begin{figure}[tb]
		\centering
		\begin{tikzpicture}[->,>=stealth',shorten >=1pt,auto,node
			distance=2cm,thick]
			\node[initial,initial text=,state] (q0) {$q_I$};
			\node[state,accepting] (q1) [right of=q0]	{$p$};

			\path[every node/.style={font=\sffamily\small}]
			(q0) edge  [loop above] node {$a$} ()
			(q1) edge  [loop above] node {$a$,$b$} (q1)
			(q0) edge  [] node {$b$} (q1);
		\end{tikzpicture}
		\caption{The minimum automaton accepting the language~$L=a^*b(a+b)^*$}
		\label{fig:strongly-irrev}
	\end{figure}

	\section{Decision problems}
	In this section we provide a method to decide whether a language~$L$ is
strongly or weakly irreversible, and, in the latter case, to find the
minimum~$k$ such that~$L$ is $k$-reversible.

The idea is to simultaneously analyze all the paths entering each irreversible
state~$r \in Q$ of the minimum automaton~$A$ accepting~$L$ in order to find the
longest string~$z$ that, with at least two different paths, leads to~$r$ and
defines the~$\lvert z \rvert$-forbidden pattern or to discover that there exist
arbitrarily long strings with such property.
This corresponds to analyze all couples of paths starting from two different
states~$p,q\in Q$ that, with the same string~$z$, lead to~$r$.
Intuitively, this can be done by constructing the product automaton of two
copies of the reversal automaton of~$A$, i.e., $A^R \times A^R$, and by
analyzing all paths starting from the states of the form~$(r,r)$.
Since the goal is to establish the nature of the (ir)reversability of~$L$ ---
not of~$A$ --- it is useful to recall that by Definition~\ref{def:kforb} it is
enough to consider only the couples of paths in which one of them is completely
included in the same \scc\ of~$r$, i.e., $\mathcal{C}_r=\mathcal{C}_q$.
To this aim, we are going to consider the product between~$A^R$ and a
transformation of~$A^R$ which is obtained by splitting it in \sccs.

Let~$A=(Q,\Sigma,\delta,q_I,F)$ be an irreversible \dfa, $A^R = (Q,\Sigma,
\delta^R, F, \{q_I\})$ be the reversal automaton of~$A$, and~$A^R_{\sccs} =
(Q,\Sigma,\delta_{\sccs}^R,F,\{q_I\})$ be the \nfa\ obtained by splitting~$A^R$
in its \sccs, i.e., $\delta_{\sccs}^R(r,a) = \{ q \mid q\in\delta^R(r,a) \text{
and } \mathcal{C}_r = \mathcal{C}_q\}$, for~$r \in Q$, $a\in\Sigma$.
Let us define the automaton~$\hat A = A^R \times A^R_{\sccs}$ as follows:
$\hat A = (\hat Q,\Sigma,\hat\delta,\hat I,\hat F)$ where $\hat Q = \hat F = Q
\times Q$, $\hat I = \{(r,r) \mid r \in Q\}$, and $\hat \delta((r',r''),a) =
\{(p,q) \in \delta^R(r',a) \times \delta^R_{\sccs}(r'',a) \mid p \neq q\}$.

The resulting automaton~$\hat A$ accepts all strings~$z$ which
define a~$\lvert z \rvert$-forbidden pattern (plus the empty string).
Formally, this follows from the following lemma, whose proof can be given by
induction:
\begin{lemma}
	Consider a path~$(r,r),(p_1,q_1),\ldots,(p_{\lvert z \rvert-1},q_{\lvert z
	\rvert-1}), (p,q)$ in~$\hat A$ from a state~$(r,r)$ to~$(p,q)$ on a
	string~$z$.
	Then~$\hat\delta((r,r),z) \ni (p,q)$ if and only if all the following
	conditions are satisfied:
	\begin{enumerate}
		\item $p_i \neq q_i$ for each~$0 < i < \lvert z \rvert$, $p \neq q$,
		\item $\delta(p,z) = r$,
		\item $\delta(q,z) = r$ and $\mathcal{C}_r = \mathcal{C}_q$.
	\end{enumerate}
\end{lemma}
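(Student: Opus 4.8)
The plan is to prove both directions of the biconditional by induction on $\lvert z\rvert$. The automaton $\hat A$ is a product of two NFAs, so a transition in $\hat A$ from $(r',r'')$ to $(p,q)$ on a symbol $a$ means exactly that $p\in\delta^R(r',a)$, $q\in\delta^R_{\sccs}(r'',a)$, and $p\neq q$; unwinding the definitions of $\delta^R$ and $\delta^R_{\sccs}$, this says $\delta(p,a)=r'$, $\delta(q,a)=r''$, $\mathcal C_q=\mathcal C_{r''}$, and $p\neq q$. This single-step unpacking is the engine of the induction. For the base case $\lvert z\rvert=0$ the path is just $(r,r)$ with $z=\varepsilon$, and all three conditions hold trivially ($\delta(r,\varepsilon)=r$, $\mathcal C_r=\mathcal C_r$, and there are no indices $i$ with $0<i<0$, while $p=q=r$ so condition~1's clause $p\neq q$ reads $r\neq r$ — this looks like it needs a small caveat, which I address below).

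First I would set up the inductive step cleanly. Write $z=az'$ with $a\in\Sigma$ and $z'\in\Sigma^{\lvert z\rvert-1}$, so that a length-$\lvert z\rvert$ path from $(r,r)$ to $(p,q)$ factors as a first step $(r,r)\to(p_1,q_1)$ on $a$ followed by a path of length $\lvert z\rvert-1$ from $(p_1,q_1)$ to $(p,q)$ on $z'$. For the forward direction, assume $(p,q)\in\hat\delta((r,r),z)$. The first step gives $\delta(p_1,a)=r$, $\delta(q_1,a)=r$, $\mathcal C_{q_1}=\mathcal C_r$, and $p_1\neq q_1$. The remaining path is a path in $\hat A$ from the state $(p_1,q_1)$; but the lemma as stated is about paths from a \emph{diagonal} state $(s,s)$, so to invoke the induction hypothesis I would instead phrase and prove the stronger statement for paths starting at an arbitrary state $(r',r'')$ with $r'=r$ being unnecessary — in fact it is cleaner to prove: $(p,q)\in\hat\delta((r',r''),z)$ iff $p_i\neq q_i$ for $0<i<\lvert z\rvert$, $p\neq q$, $\delta(p,z)=r'$, $\delta(q,z)=r''$, and $\mathcal C_{q}=\mathcal C_{r''}$ — and then specialize to $r'=r''=r$, which forces $\mathcal C_q=\mathcal C_r$ as required. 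With this reformulation the induction hypothesis applies directly to the sub-path from $(p_1,q_1)$: it yields $\delta(p,z')=p_1$, $\delta(q,z')=q_1$, $\mathcal C_q=\mathcal C_{q_1}$, and the inequality conditions along $z'$. Combining $\delta(p,z')=p_1$ with $\delta(p_1,a)=r$ gives $\delta(p,z)=\delta(p,z'a)$ — here I must be careful about the direction of reading: since $A^R$ reverses transitions, the path in $\hat A$ on $z=az'$ corresponds to reading $z$ \emph{backwards} in $A$, so $\delta(p,z')=p_1$ and $\delta(p_1,a)=r$ should compose to $\delta(p,z'a)=r$; one checks that $z'a$ is precisely the string $z$ read in the original orientation, matching condition~2. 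The chain $\mathcal C_q=\mathcal C_{q_1}=\mathcal C_r$ gives condition~3's SCC clause, and $p_1\neq q_1$ supplies the missing interior inequality at index $1$.

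The converse direction runs the same factorization backwards: given that conditions 1--3 hold for $z=az'$, set $p_1=\delta(p,z')$ and $q_1=\delta(q,z')$ (consistently with the reversed-reading convention), check that $(r,r)\to(p_1,q_1)$ is a legal first step of $\hat A$ — which requires $\delta(p_1,a)=r$, $\delta(q_1,a)=r$, $p_1\neq q_1$, and $\mathcal C_{q_1}=\mathcal C_r$, all of which follow from the hypotheses and the determinism of $\delta$ — and then apply the induction hypothesis in the ``if'' direction to the sub-path. The main obstacle I anticipate is purely bookkeeping: keeping the orientation straight between paths in $\hat A$ (which read $z$) and computations in $A$ (which read the reverse of $z$), and handling the two slightly anomalous boundary indices ($i=0$ and $i=\lvert z\rvert$) where the ``$p_i\neq q_i$'' requirement is carried by the separate clause $p\neq q$ rather than by the ``$0<i<\lvert z\rvert$'' clause. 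In particular the $\lvert z\rvert=0$ base case must be stated as: the only path is the trivial one, $\hat\delta((r,r),\varepsilon)=\{(r,r)\}$ by the definition of $\hat Q=\hat F$ and reachability, and for this path conditions 2 and 3 hold while condition 1 is vacuous on interior indices and its clause $p\neq q$ is simply not imposed at the endpoints of a length-$0$ path — or, alternatively, one restricts the lemma to $\lvert z\rvert\geq 1$ and treats $\varepsilon$ as the separately-noted ``plus the empty string'' case mentioned in the text. Either convention makes the induction go through; I would pick the latter for cleanliness and note it explicitly.
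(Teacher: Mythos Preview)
Your proposal is correct and follows exactly the approach the paper indicates: the paper does not actually give a proof of this lemma, stating only that it ``can be given by induction,'' and your induction on $\lvert z\rvert$ is precisely that. You have moreover identified the bookkeeping points the paper leaves implicit---the need to strengthen the inductive hypothesis to arbitrary starting pairs $(r',r'')$, the orientation mismatch between reading $z$ in $\hat A$ and in $A$, and the degenerate $\lvert z\rvert=0$ case that the surrounding text handles with the parenthetical ``plus the empty string''---so your write-up is, if anything, more explicit than the paper's treatment.
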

Considering Theorem~\ref{th:kforb}, this leads to state the following 
\begin{lemma}
	Let $A$ be a minimum~$n$-state \dfa\ and~$\hat A$ be the \nfa\ defined as
	above.
	Then:
	\begin{itemize}
		\item The following statements are equivalent:
			\begin{itemize}
				\item $A$ is strongly irreversible,
				\item $L(\hat A)$ is an infinite language,
				\item $L(\hat A)$ contains a string of length~$\frac{n^2-n}{2} + 1$.
			\end{itemize}
		\item For each~$k>0$, $L(A) \in \krev{k}$ if and only if~$L(\hat A)$
			contains only strings of length less than~$k$.
	\end{itemize} 
\end{lemma}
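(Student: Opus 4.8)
The plan is to reduce the whole lemma to arithmetic on the set of lengths $K=\{\,|z|\mid z\in L(\hat A)\,\}$, feeding in the preceding lemma, the downward‑closure remark after Definition~\ref{def:kforb}, and Theorems~\ref{th:kforb} and~\ref{th:longforb}.

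The first step — and the one I expect to take the most care — is to promote the preceding lemma to the statement: \emph{for every $k>0$, $L(\hat A)$ contains a string of length $k$ if and only if $A$ contains the $k$-forbidden pattern.} Since $\hat F=Q\times Q$ and $\hat I=\{(r,r)\}$, a word $z$ is accepted exactly when $\hat\delta((r,r),z)\ni(p,q)$ for some $r,p,q$, and by the preceding lemma this means $p\neq q$, $\delta(p,z)=\delta(q,z)=r$, $\mathcal C_r=\mathcal C_q$, with the two runs staying off the diagonal at every step in between. For $|z|=k\ge 1$ I would write the relevant word as $x\sigma$ with $|x|=k-1$ and pick $w$ driving $r$ back to $q$ inside their common \scc; the off‑diagonal condition at the last‑but‑one position is then precisely $\delta(p,x)\neq\delta(q,x)$, and in the reverse direction $\delta(p,x)\neq\delta(q,x)$ already forces every earlier pair apart because $A$ is deterministic. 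The small cases ($k=1$, where $x=\varepsilon$, and $z=\varepsilon$, which is always accepted and to which no pattern is attached) need a line each but are immediate. Matching ``off the diagonal'' with the inequality $\delta(p,x)\neq\delta(q,x)$ of Definition~\ref{def:kforb} is the only genuinely fiddly point.

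With that in hand, set $K=\{\,|z|\mid z\in L(\hat A)\,\}$. By the above and the remark after Definition~\ref{def:kforb}, $K\cap\{1,2,\ldots\}$ is an initial segment of the positive integers; since $0\in K$ always and $\Sigma$ is finite, either $K=\{0,1,\ldots,m\}$ for some $m\ge 0$ (equivalently $L(\hat A)$ finite) or $K=\{0,1,2,\ldots\}$ (equivalently $L(\hat A)$ infinite). Now the three bulleted conditions in the first item line up: by the observation at the start of Section~\ref{sec:strongly}, $A$ is strongly irreversible iff $A$ has a $k$-forbidden pattern for every $k>0$, iff $K=\{0,1,2,\ldots\}$, iff $L(\hat A)$ is infinite; and writing $N=\frac{n^2-n}{2}+1$, the third condition says $N\in K$, which is clearly implied by $K=\{0,1,2,\ldots\}$, while conversely $N\in K$ gives $A$ an $N$-forbidden pattern with $N>\frac{n^2-n}{2}$, so Theorem~\ref{th:longforb} produces a $k$-forbidden pattern for every $k>0$ and hence strong irreversibility.

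For the second item I would invoke Theorem~\ref{th:kforb} on the minimum \dfa\ $A$: $L(A)\in\krev{k}$ iff $A$ does not contain the $k$-forbidden pattern, i.e.\ iff $k\notin K$; and because $K\cap\{1,2,\ldots\}$ is an initial segment, $k\notin K$ is the same as $K\subseteq\{0,1,\ldots,k-1\}$, which is exactly the assertion that $L(\hat A)$ contains only strings of length less than $k$. Apart from the first step, the argument is pure bookkeeping with initial segments of $\mathbb N$.
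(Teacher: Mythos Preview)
Your proposal is correct and is precisely the argument the paper leaves implicit: the paper gives no proof beyond the line ``Considering Theorem~\ref{th:kforb}, this leads to state the following'', and your write-up supplies exactly the intended details, namely the promotion of the preceding lemma to the length-$k$/$k$-forbidden-pattern equivalence, together with Theorem~\ref{th:kforb}, Theorem~\ref{th:longforb}, and the downward-closure remark after Definition~\ref{def:kforb}. One harmless remark: depending on how one orients the reading of $z$ in $\hat A$, the pair $(\delta(p,x),\delta(q,x))$ may actually be $(p_1,q_1)$ rather than the last-but-one pair, but since your determinism argument forces \emph{every} intermediate pair off the diagonal this does not affect the proof.
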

The same argument can be exploited to prove that the problem of checking
whether~$L(A)$ is strongly or weakly irreversible is in~$\textsc{NL}$, namely
the class of problems accepted by nondeterministic logarithmic space bounded
Turing machines.

\begin{theorem}
	The problem of deciding whether a language is strongly or weakly
	irreversible is~$\textsc{NL}$-complete.
\end{theorem}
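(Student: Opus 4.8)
The plan is to establish separately that the problem lies in~$\textsc{NL}$ and that it is $\textsc{NL}$-hard; since $\textsc{NL}=\textsc{co-NL}$, it does not matter whether we phrase the decision as ``is $L$ strongly irreversible?'' or as ``is $L$ weakly irreversible?''. For \emph{membership}, after a logarithmic-space step I may assume the input \dfa~$A$ is the minimum one accepting~$L$: deciding equivalence of two states of a \dfa\ is an instance of (co-)reachability in the product of~$A$ with itself, hence in $\textsc{NL}=\textsc{co-NL}$, so the canonical representative of each equivalence class, and therefore the transition function and the strongly connected components of the minimum \dfa, can be produced ``on the fly'' within these resources. With those primitives the \nfa~$\hat A = A^{R}\times A^{R}_{\sccs}$ defined above can be traversed in logarithmic space, and by the characterization just established $L$ is strongly irreversible if and only if $L(\hat A)$ is infinite; as every state of~$\hat A$ is accepting, this holds if and only if there are an initial state~$(r,r)$ and a state~$v$ of~$\hat A$ such that $v$ is reachable from~$(r,r)$ and $v$ lies on a cycle --- a predicate built from nested reachability queries and hence in~$\textsc{NL}$. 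Closure of~$\textsc{NL}$ under logarithmic-space computation (together with $\textsc{NL}=\textsc{co-NL}$) places the whole decision in~$\textsc{NL}$; moreover, in the weakly irreversible case the least~$k$ with $L\in\krev{k}$ is one plus the length of the longest string of~$L(\hat A)$, which by Corollary~\ref{cor:strongly-irrev} is at most~$\frac{n^{2}-n}{2}+1$ and is computed within the same bounds.

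For \emph{hardness} I reduce from directed graph reachability, which is $\textsc{NL}$-complete. Given a digraph~$G$, a source~$s$ and a target~$t$, first detect in logarithmic space the cases $s=t$ and $(s,t)\in E$, in which $t$ is trivially reachable and we output a fixed \dfa\ for the (strongly irreversible, by the example in Section~\ref{sec:strongly}) language~$a^{*}b(a+b)^{*}$; otherwise replace~$s$ by a fresh vertex~$s_{0}$ with the same out-neighbours and add the edge~$(t,s_{0})$, obtaining~$G'$. Then $s_{0}$ lies on a cycle of~$G'$ if and only if $t$ is reachable from~$s_{0}$ in~$G'$ if and only if $t$ was reachable from~$s$ in~$G$, and the only edge of~$G'$ entering~$s_{0}$ is $(t,s_{0})$. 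Now build the \dfa~$A$ whose states are the vertices of~$G'$ together with a new initial state~$q_{I}$ and a sink state~$p_{\star}$: each edge~$e$ of~$G'$ becomes a transition on its own private letter~$a_{e}$ between the corresponding vertex-states; $q_{I}$ has a single transition, on a fresh letter, to~$s_{0}$; $\delta(s_{0},b)=p_{\star}$ for a fresh letter~$b$; $p_{\star}$ carries a self-loop on \emph{every} letter; and all states except~$q_{I}$ are accepting (unreachable states, if any, are irrelevant to~$L(A)$). This is logarithmic space. The only state of~$A$ receiving two transitions on the same letter is~$p_{\star}$ (on~$b$, from~$s_{0}$ and from itself), and since~$p_{\star}$ reaches only itself, $p_{\star}$ is the unique state whose continuations are all of~$\Sigma^{*}$, so $[p_{\star}]$ is the sole such state in the minimum \dfa\ of~$L(A)$.

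Consequently, in every $k$-forbidden pattern of that minimum \dfa\ one is forced to take $r=q=[p_{\star}]$, $\sigma=b$ and $w=\varepsilon$, with the other incoming state equal to~$[s_{0}]$; hence such a pattern exists precisely when $[s_{0}]$ is reached by a path of length~$k-1$ inside the reachable part. If $t$ is \emph{not} reachable from~$s$, then $s_{0}$ lies on no cycle of~$G'$ and, since every path into~$s_{0}$ uses the edge~$(t,s_{0})$ from the then unreachable vertex~$t$, every path ending in~$[s_{0}]$ has length at most~$1$; any cycle of~$G$ reachable from~$s_{0}$ cannot reach~$s_{0}$ and hence cannot feed the merging gadget, so by Theorem~\ref{th:kforb} the minimum \dfa\ has no $k$-forbidden pattern for $k\ge 3$ and $L(A)$ is weakly irreversible. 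If $t$ \emph{is} reachable from~$s$, then $s_{0}$ lies on a cycle of~$G'$, whose image is a non-trivial strongly connected component~$\mathcal{C}_{[s_{0}]}$ of the minimum \dfa; combining arbitrarily long closed walks in~$\mathcal{C}_{[s_{0}]}$ through~$[s_{0}]$ with the transition~$\delta([s_{0}],b)=[p_{\star}]$ and the self-loops of~$[p_{\star}]$ yields $k$-forbidden patterns for arbitrarily large~$k$, hence (by the monotonicity observed after Definition~\ref{def:kforb}) for every~$k>0$, so $L(A)$ is strongly irreversible by Theorem~\ref{th:kforb}. Thus $L(A)$ is strongly irreversible if and only if $t$ is reachable from~$s$, which gives $\textsc{NL}$-hardness, and together with the membership part the problem is $\textsc{NL}$-complete.

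The delicate point is the hardness direction, namely ruling out ``spurious'' long forbidden patterns when $t$ is unreachable: cycles of~$G$ that are reachable from the source but avoid~$s$ and~$t$ must be prevented from interacting with the merging gadget, which is exactly why everything is funnelled into~$s_{0}$ and only from~$s_{0}$ into~$p_{\star}$, and why~$s_{0}$ is engineered to carry no incoming $G$-edge. The membership direction is routine once one invokes $\textsc{NL}=\textsc{co-NL}$ and the closure of~$\textsc{NL}$ under logarithmic-space reductions, so that the on-the-fly minimization, the construction of~$\hat A$, and the reachable-cycle test compose to a single $\textsc{NL}$ computation.
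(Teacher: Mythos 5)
Your membership argument is essentially the paper's: reduce to detecting a reachable cycle in $\hat A$, with the (welcome) extra care of performing minimization on the fly via $\textsc{NL}=\textsc{co-NL}$. The hardness direction, however, uses a different gadget from the paper's, and it has a genuine gap. The paper explicitly warns that ``the instance of our problem should be an automaton containing only useful states, while automata that can be `intuitively' obtained from GAP instances could have useless states and, detecting them, would require to solve GAP'' --- and this is exactly the trap your reduction falls into. In your \dfa~$A$, every vertex of~$G$ that is not reachable from~$s$ becomes an unreachable state (the only way from~$q_I$ into the $G'$-part is through~$s_0$), so the output of your reduction is essentially never a trimmed automaton, which is the convention the paper fixes for all automata it considers. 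Worse, the correctness of your gadget secretly \emph{depends} on this: in the ``no'' case you need the edge~$(t,s_0)$ to be invisible to the minimum \dfa, and it is invisible precisely because~$t$ is an unreachable state --- but whether~$t$ is reachable is the very predicate you are reducing from, so you cannot trim the automaton inside the logspace reduction.

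Nor is the defect repaired by the obvious patch of adding transitions~$\delta(q_I,c_v)=v$ on fresh letters to make every vertex useful: once~$t$ is reachable as a \emph{state} independently of its reachability from~$s$ in~$G$, any long path or cycle of~$G$ ending at~$t$ feeds~$s_0$ through the edge~$(t,s_0)$ and produces $k$-forbidden patterns for arbitrarily large~$k$, so~$A$ would be declared strongly irreversible even when~$t$ is not reachable from~$s$. Avoiding such spurious patterns while keeping all states useful is the actual content of the paper's construction (the clique-like \scc\ on~$\{q_1,\dots,q_{n-1},n\}$, the $\$$-loops at~$1$ and~$q_1$, and the transitions~$\delta(q_I,i)=i$, $\delta(i,\sharp)=q_F$ that make every state useful without creating long forbidden patterns away from~$n$). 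You correctly identify in your last paragraph that preventing unrelated cycles from interacting with the merging gadget is the delicate point, but your solution handles it only by exploiting unreachability, which is not available here. A secondary, easily fixed ambiguity: if~$G$ has a self-loop at~$s$, your~$s_0$ must not inherit it, otherwise~$s_0$ lies on a cycle regardless of the answer.
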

\begin{proof}(sketch)
	Given a minimum \dfa\ accepting the language under consideration and the
	above described automaton~$\hat A$, the problem can be reduced to testing
	if the transition graph of~$\hat A$ contains at least one loop.
	In such a case, there are arbitrarily long strings in~$L(\hat A)$, namely
	strings describing $k$-forbidden patterns for arbitrarily large~$k$,
	and~$L(A)$ is strongly irreversible.
	The problem of verifying the existence of a loop is in~$\textsc{NL}$.

	To prove the~$\textsc{NL}$-completeness, we show a reduction from the
	\emph{Graph Accessibility Problem} ($\emph{GAP}$) which
	is~$\textsc{NL}$-complete (for further details see~\cite{Jones75}).
	Let~$G=(V,E)$ be a directed graph where~$V = \{1, \ldots, n\}$.
	Our goal is to define a \dfa~$A$\ such that $A$ is strongly irreversible if
	and only if there exists a path from~$1$ to~$n$ in~$G$.
	We build~$A'$ by starting from the same ``state structure'' of~$G$, and
	adding a \scc\ providing the forbidden pattern when combined with a path
	from~$1$ to~$n$ in the original graph.

	We stress that the instance of our problem should be an automaton
	containing only useful states, while automata that can be ``intuitively''
	obtained from GAP instances could have useless states and, detecting them,
	would require to solve GAP.

	Let $A' = (Q, \Sigma, \delta, q_I, \{ q_F \})$ be a \dfa\ where~$Q = V \cup
	\{ q_I, q_F, q_1,\ldots,q_{n-1}\}$, $\Sigma =\{0, \ldots, n, \$, \sharp\}$,
	and~$\delta$ is defined as follows:
	\begin{enumerate}[label=\roman*.]
		\item\label{gap:original}
			$\delta(i,j) = j$ for~$(i,j) \in E$, $i \neq j$
		\item\label{gap:clique}
			$\delta(q_i,j) = q_j$ for~$0 < i,j < n$, $i \neq j$
		\item\label{gap:n}
			$\delta(q_i,n) = n$ for~$0 < i < n$
		\item\label{gap:scc}
			$\delta(n,0) = q_1$
		\item\label{gap:useful}
			$\delta(q_I,i) = i$ and~$\delta(i,\sharp) = q_F$ for~$0 < i \leq n$
		\item\label{gap:loop}
			$\delta(1,\$) = 1$ and~$\delta(q_1, \$) = q_1$.
	\end{enumerate}
	Observe that the restriction of the underlying graph~$A'$ to
	states~$1,\ldots,n$ coincides with~$G$ (transitions \ref{gap:original}).
	In addition, the set of states~$\{q_1, \ldots, q_{n-1}\}$ extends the
	\scc~$\mathcal{C}_n$ so that each state can reach the others
	in~$\mathcal{C}_n$ with a single transition (transitions~\ref{gap:clique},
	\ref{gap:n}, and~\ref{gap:scc}).
	This implies that the state~$n$ is reachable from~$q_1$ with all the
	possible paths passing through the states in the \scc.
	Furthermore, a loop is added to states~$1$ and~$q_1$ on the symbol~$\$$ in
	order to create a forbidden pattern (transitions~\ref{gap:loop}).
	Notice that each state in~$Q$ is useful (transitions~\ref{gap:useful}).

	In such a way, the states~$\{1,n,q_1\}$ form a forbidden pattern with
	strings of arbitrary length if and only if the given graph contains a path
	from~$n$ to~$1$.
	Notice that any state~$i\in Q\setminus\{n\}$ is, at most, $1$-irreversible.
	So we can conclude that~$A'$ is strongly irreversible if and only if there
	exists a path from~$1$ to~$n$ in~$G$.

	It can be shown that the reduction can be computed in deterministic
	logarithmic space.
\end{proof}
 
	\section{Conclusion}

	We introduced and studied the notions of strong and weak irreversibility for finite automata and regular languages. 
	In Section~\ref{sec:characterization} we proved the existence of an infinite hierarchy of weakly irreversible languages, while in Section~\ref{sec:strongly} we showed the existence of strongly irreversible languages, namely of regular languages that are not weakly irreversible. In both cases, the witness languages are defined over a binary alphabet, so the question arises if the same results hold in the case of a one-letter alphabet, i.e., in the case of unary languages. We now briefly discuss this point.

	First of all, we remind the reader that the transition graph of a unary \dfa\ consists of an initial path, which is followed by a loop (for a recent survey on unary automata, we address the reader to~\cite{Pig15}). Hence, a unary \dfa\ is reversible if and only if the initial path is of length~$0$, i.e., the automaton consists only of a loop (in this case the accepted language is said to be \emph{cyclic}). We can also observe that given an integer~$k>0$, a unary language is~$k$-reversible if and only if it is accepted by a \dfa\ with an initial path of less than~$k$ states. Hence, for each~$k$, the language~$a^{k-1}a^*$ is~$k$-reversible, but not~$(k-1)$-reversible. This shows the existence of an infinite hierarchy of weakly irreversible languages even in the unary case.
	Furthermore, from the above discussion, we can observe that if a unary language is accepted by a~\dfa\ with an initial path of~$k$ states, then it is $(k+1)$-reversible. This implies that each unary regular language is weakly irreversible (see also~\cite[Proposition~10]{KW14}).
	Hence, to obtain strongly irreversible languages, we need alphabets of at least two letters.

	\medskip
	The definition of~$k$-reversible automata and languages have been given for each integer~$k>0$. One could ask if it does make sense to consider a notion of~$0$-reversibility. According to the interpretation we gave to~$k$-reversibility, a state is~$0$-reversible when in each computation its predecessor can be obtained by knowing the last~$0$ symbols which have been read from the input, i.e., without the knowledge of any previous input symbol. This means that a~$0$-irreversible state can have only one entering transition, or no entering transitions if it is the initial state. As a consequence, the transition graph of a~$0$-reversible automaton is a tree rooted in the initial state and~$0$-reversible languages are exactly finite languages.

	\section*{Acknowledgment}
	We thank the anonymous referees for valuable suggestions, in particular for
	addressing us to consider the results obtained in~\cite{KW14}.

	\bibliographystyle{eptcs}
	\bibliography{biblio}  
\end{document}